 \newif\iflong
\newif\ifshort
 \newcommand{\lv}[1]{}
\newcommand{\translate}{\textup{\textsf{translate}}}
\newcommand{\trace}{\textup{\textsf{trace}}}
\newcommand{\spcmp}{\textsc{\textup{Rect-MP}}\xspace} 
\newcommand{\ppcmp}{\textsc{\textup{Rect-CMP}}\xspace} 
\newcommand{\prect}{\textsc{\textup{Rect-CMP}}\xspace}
\newcommand{\srect}{\textsc{\textup{Rect-MP}}\xspace}
\def\msquare{\mathord{\scalebox{0.7}[0.7]{\scalerel*{\Box}{\strut}}}}
\theoremstyle{plain}
\newcommand{\bigoh}{\mathcal{O}}
\newtheorem{fact}[theorem]{Fact}
\newlength{\alginputwidth}
\newcommand{\Oh}{{\mathcal O}}
\newcommand{\nat}{\mathbb{N}}
\newcommand{\PSPACE}{\mbox{\sf PSPACE}}
\newcommand{\NP}{\mbox{{\sf NP}}}
\newcommand{\FPT}{\mbox{{\sf FPT}}}
\newcommand{\R}{\mathcal{R}}
\newcommand{\I}{\mathcal{I}}
\newcommand{\SSS}{\mathcal{S}}
\newcommand{\VVV}{\mathcal{V}}
\newcommand{\LLL}{\mathcal{L}}
\newcommand{\normalproblem}[3]{\noindent  
{\sc #1}
\\
{\bf Given:} #2\\
{\bf Question:} #3
 
\medskip
}
\newcommand {\mm}[1] {\ifmmode{#1}\else{\mbox{\(#1\)}}\fi}
\newcommand{\Rspace}        {\mm{\mathbb{R}}}
\newcommand{\eps}        {\varepsilon}
\newcommand{\problem}{Rectangular Motion Planning }
\newcommand{\region}{feasibility region}
\newcommand{\reach}{V}
 \title{On the Parameterized Complexity of Motion Planning for Rectangular Robots}  
\titlerunning{Motion Planning for Rectangular Robots}
\author{Iyad Kanj}{School of Computing, DePaul University, Chicago, USA}{ikanj@depaul.edu}{0000-0003-1698-8829}{}
\author{Salman Parsa}{School of Computing, DePaul University, Chicago, USA}{s.parsa@depaul.edu}{0000-0002-8179-9322}{}
\authorrunning{I.\ Kanj, S.\ Parsa} %
\keywords{motion planning of rectangular robots, coordinated motion planing of rectangular robots, parameterized complexity} 
\begin{document}

\maketitle

\begin{abstract} 
We study computationally-hard fundamental motion planning problems where the goal is to translate $k$ axis-aligned rectangular robots from their initial positions to their final positions without collision, and with the minimum number of translation moves. Our aim is to understand the interplay between the number of robots and the geometric complexity of the input instance measured by the input size, which is the number of bits needed to encode the coordinates of the rectangles' vertices. We focus on axis-aligned translations, and more generally, translations restricted to a given set of directions, and we study the two settings where the robots move in the free plane, and where they are confined to a bounding box. We also consider two modes of motion: serial and parallel. We obtain fixed-parameter tractable (\FPT) algorithms parameterized by $k$ for all the settings under consideration.

In the case where the robots move serially (i.e., one in each time step) and axis-aligned, we prove a structural result stating that every problem instance admits an optimal solution in which the moves are along a grid, whose size is a function of $k$, that can be defined based on the input instance. This structural result implies that the problem is fixed-parameter tractable parameterized by $k$. 

We also consider the case in which the robots move in parallel (i.e., multiple robots can move during the same time step), and which falls under the category of Coordinated Motion Planning problems. Our techniques for the axis-aligned motion here differ from those for the case of serial motion. We employ a search tree approach and perform a careful examination of the relative geometric positions of the robots that allow us to reduce the problem to \FPT-many Linear Programming instances, thus obtaining an \FPT{} algorithm.

Finally, we show that, when the robots move in the free plane, the \FPT{} results for the serial motion case carry over to the case where the translations are restricted to any given set of directions.

\end{abstract}

\section{Introduction}
\label{sec:intro}

\subsection{Motivation}
We study the parameterized complexity of computationally-hard fundamental motion planning problems where the goal is to translate $k$ axis-aligned rectangular robots from their initial positions to their final positions without collision, and with the minimum number of translation moves. The parameter under consideration is the number $k$ of robots, and the input length $N$ is the number of bits needed to encode the coordinates of the vertices of the rectangles. We point out that, in our study, we deviate from using the real RAM model~\cite{shamos}, which assumes that arithmetic operations over the reals can be performed in constant time, and use the Turing machine model instead. We believe that our study is more faithful to the geometric setting under consideration than the real RAM model. The input length $N$ can be much larger than the number $k$ of rectangles, and hence, our study of the parameterized complexity of the problem, which aims at investigating whether the problem admits algorithms whose running time is polynomially-dependent on the input size, is both meaningful and significant in such settings.

We study two settings where the robots move in the free plane, and where they are confined to a bounding box. We also consider two modes of motion: serial and parallel. Problems  with the latter motion mode fall under the category of Coordinated Motion Planning problems.  We point out that the problems under consideration have close connections to well-studied motion planning and reconfiguration problems, including the famous \NP-complete $(n^2-1)$-puzzle~\cite{nphard2,nphard1} and the \PSPACE-hard warehouseman's problem~\cite{hopcroft1984} where the movement directions are limited, among many others. Moreover, the Coordinated Motion Planning for robots moving on a rectangular grid featured as the SoCG 2021 Challenge~\cite{socg2021}.

For most natural geometric (or continuous) motion planning problems, pertaining to the motion of well-defined geometric shapes in an environment with/without polygonal obstacles,
the feasibility of an instance of the problem can be formulated as a statement in the first-order theory of the reals. Therefore, it is decidable using Tarski's method in time that is polynomially-dependent on the input length, and exponentially-dependent on the number of variables, number of polynomials and the highest degree over all the polynomials in the statement (see~\cite{renegar1,renegar2, renegar3,schwartzsharir2}). When the parameter is the number $k$ of robots, if an upper bound in $k$ on the number of moves in a solution exists, then the existence of a solution can be decided in \FPT{}-time using the above general machinery. However, this approach is non-constructive, and we might not be able to extract in \FPT-time a solution to a feasible instance, as the only information we have about the solution is that it is algebraic. 

There has been very little work on the parameterized complexity of these fundamental geometric motion planning problems, and our understanding of their parameterized complexity is lacking. Most of the early work (e.g., see~\cite{alagar,schwartzsharir3}) on such problems have resulted in algorithms for deciding \emph{only} the feasibility of the instance and whose running time takes the form $\Oh(n^{f(k)})$, where $n$ is the number of edges/walls composing the polygonal obstacles in the environment. Therefore, it is natural to investigate the parameterized complexity of the more practical variants of the problems, where one seeks a solution that meets a given upper bound on the number of robot moves or an optimal solution w.r.t.~the number of robot moves, which remained unanswered by the earlier works. 

The goal of this paper is to shed light on the parameterized complexity of these motion planning problems by considering the very natural setting of axis-aligned translations (i.e., horizontal and vertical), and more generally, translations restricted to a given (or a fixed) set of directions. We aim to understand the interplay between the number of robots and the complexity of the input instance (i.e., the input size). Our results settle the parameterized complexity of most of the studied problem variants by showing that they are \FPT{}.

\subsection{Related Work}
There has been a lot of work, dating back to the 1980's, on the motion planning of geometric shapes  (e.g., disks, rectangles, polygons) in the Euclidean plane (with or without obstacles), motivated by their applications in robotics. In this setting, robots may move along continuous curves. The problem is very hard, and most of the work focused on the feasibility of the problem for various shapes and environment settings (disks, rectangles, obstacle-free environment, environment with polygonal obstacles, etc.).

The early works by Schwartz and Sharir~\cite{schwartzsharir3,schwartzsharir1, schwartzsharir2} showed that deciding the feasibility of an instance of the problem for two disks in a region bounded by $n$ ``walls'' can be done in time $\Oh(n^3)$~\cite{schwartzsharir3}; they mentioned that their result can be generalized to any number, $k$, of disks to yield an $\Oh(n^{h(k)})$-time algorithm, for some function $h$ of $k$. When studying feasibility, the moves can be assumed to be performed serially, and a move may extend over any Euclidean length. Ramanathan and Alagar~\cite{alagar} improved the result of Schwartz and Sharir~\cite{schwartzsharir3}  to $\Oh(n^{k})$, conjecturing that this running time is asymptotically optimal. The feasibility of the coordinated motion planning of rectangular robots confined to a bounding box was shown to be \PSPACE-hard~\cite{hopcroft1984,hopcroft1986}. The problem of moving disks among polygonal obstacles in the plane was shown be strongly \NP-hard~\cite{spirakis}; when the shapes are unit squares, Solovey and Halprin~\cite{solovey} showed the problem to be \PSPACE-hard.

Dumitrescu and Jiang~\cite{dumitrescu2013}
studied the problem of moving unit disks in an obstacle-free environment. They consider two types of moves: translation (i.e., a linear move) and sliding (i.e., a move along a continuous curve). In a single step, a unit disk may move any distance either along a line (translation) or a curve (sliding) provided that it does not collide with another disk. They showed that deciding whether the disks can reach their destinations within $r \in \mathbb{N}$ moves is \NP-hard, for either of the two movement types. 
Constant-ratio approximation algorithms for the coordinated motion planning of unit disks in the plane, under some separation condition, where given in~\cite{demaine}. For further work on the motion planning of disks, we refer to the survey of Dumitrescu~\cite{survey}.

The problem of moving unit disks in the plane is related to the problem of reconfiguring/moving coins, which has also been studied and shown to be \NP-hard~\cite{coins}.
Moreover, there has been work on the continuous collision-free motion of a constant number of rectangles in the plane, from their initial positions to their final positions, with the goal of optimizing the total Euclidean covered length; we refer to~\cite{agarwal2023,esteban2023} for some of the most recent works on this topic.

Perhaps the most relevant, but orthogonal, work to ours, in the sense that it pertains to studying the parameterized complexity of translating rectangles, is the paper of Fernau et al.~\cite{fernaucccg}. In~\cite{fernaucccg}, they considered a geometric variant of the \PSPACE-complete Rush-Hour problem, which itself was shown to be \PSPACE-complete~\cite{flake2002}. In this variant, cars are represented by rectangles confined to a bounding box, and cars move serially. Each car can either move horizontally or vertically (or not move at all, i.e., is an obstacle), but never both during its whole motion; that is, each car slides on a horizontal track, or a vertical track. The goal is to navigate each car to its destination and a designated car to a designated rectangle in the box (whose corner coincides with the origin). They showed that the problem is \FPT{} when parameterized by either the number of cars or the number of moves.  

Finally, we mention that Eiben et al.~\cite{egksocg2023} studied the parameterized complexity of Coordinated Motion Planning in the combinatorial setting where the robots move on a rectangular grid. Among other results, they presented \FPT{} algorithms,  parameterized by the number of robots, for each of the two objective targets of minimizing the makespan and the total travel length~\cite{egksocg2023}.

\subsection{Contributions}
We present fixed-parameter algorithms parameterized by the number $k$ of (rectangular axis-aligned) robots for most of the problem variants and settings under consideration.  A byproduct of our results for the problems under consideration is that rational instances of these problems admit rational solutions.

\begin{itemize}
 \item[(i)] We give an \FPT-algorithm for the axis-aligned serial motion in the free plane. Our proof relies on a structural result stating that every problem instance admits an optimal solution in which the moves are along a grid that can be defined based on the input instance. This structural result, combined with an upper bound of $4k$ that we prove on the number of moves in the solution to a feasible instance, implies that the problem is solvable in time $\Oh^*(k^{16k} \cdot 2^{20k^2+8k})$, and hence is \FPT.  
\end{itemize}

The structural result does not apply when the translations are not axis-aligned. To obtain \FPT{} results for these cases, we employ a search-tree approach, and perform a careful examination of the relative geometric positions of the robots, that allow us to reduce the problem to \FPT-many Linear Programming instances. 
 
\begin{itemize}
 \item[(ii)]  We show that the problem for serial motion in the free plane 
 for any fixed-cardinality given set $\VVV$ of directions (i.e., part of the input) is solvable in time $\Oh^*((4^{32k} \cdot k \cdot |\VVV|)^{4k})$. A byproduct of this \FPT{} algorithm is that the problem is in $\NP$, a result that -- up to the authors' knowledge -- was not known nor is obvious. We complement this result by showing that the aforementioned problem for any fixed set $\VVV$ of directions that contains at least two nonparallel directions (which includes the case where the motion is axis-aligned) is \NP-hard, thus concluding that the problem is \NP-complete.  

 \item[(iii)] We give an \FPT{} algorithm for the problem where the serial motion is axis-aligned and confined to a bounding box, which was shown to be \PSPACE-hard in~\cite{hopcroft1984}. This result is obtained after proving an upper bound of $2k \cdot 5^{k(k-1)}$ on the number of moves in a feasible instance of the problem.
 

 The approach used in (ii) and (iii) does not extend seamlessly to the case of coordinated motion (i.e., when robots move in parallel), as modelling collision in the case of parallel motion becomes more involved. Nevertheless, by a more careful enumeration and examination of the relative geometric positions of the robots, we give:

 \item[(iv)] An \FPT{} algorithm for the axis-aligned coordinated motion planing in the free plane that runs in $\Oh^*(5^{2k^3}\cdot 8^{4k^2})$ time, and an \FPT{} algorithm for the axis-aligned coordinated motion planning confined to a bounding box that runs in time $\Oh^*(5^{k^2}\cdot 8^{2k^2\cdot5^{k^2}} \cdot 5^{k^3\cdot 5^{k^2}})$. The \FPT{} algorithm for the former problem implies its membership in \NP. 
\end{itemize}

\ifshort
The details and the proofs of some of the results appear in the full version of the paper, which is included as an appendix.
\fi
  
\section{Preliminaries and Problem Definition}
\label{sec:prelim}

We denote by $[k]$ the set $\{1, \ldots, k\}$. Let $\R=\{R_i \mid i \in [k]\}$ be a set of axis-aligned rectangular robots. For $R_i \in \R$, we denote by $x(R_i)$ and $y(R_i)$ the horizontal and vertical dimensions of $R_i$, respectively. We will refer to a robot by its identifying name (e.g., $R_i$), which determines its location in the schedule at any time step, even though, when it is clear from the context, we will identify the robot with the rectangle it represents/occupies at a certain time step.

A \emph{translation move}, or a \emph{move}, for a robot $R_i \in \R$ w.r.t.~a direction $\overrightarrow{v}$, is a translation of $R_i$ by a vector $\alpha \cdot \overrightarrow{v}$ for some $\alpha  > 0$. For a vector $\vec{u}$,  \translate($R_i$, $\overrightarrow{u}$) denotes the axis-aligned rectangle resulting from the translation of $R_i$ by vector $\overrightarrow{u}$.
We denote by \emph{axis-aligned motion} the translation motion with respect to the set of four directions $\VVV=\{\overrightarrow{H}^{-}, \overrightarrow{H}^{+}, \overrightarrow{V}^{-}, \overrightarrow{V}^{+}\}$, which are the negative and positive unit vectors of the $x$- and $y$-axis, respectively.

In this paper, we consider two types of moves: \emph{serial} and \emph{parallel}, where the former type corresponds to the robots moving one at a time (i.e., a robot must finish its move before the next starts), and the latter type corresponds to (possibly) multiple robots moving simultaneously. We now define  collision for the two types of motion.

For a robot $R_i$ that is translated by a vector $\overrightarrow{v}$, we say that $R_i$ \emph{collides} with a stationary robot $R_j \neq R_i$, if there exists $0 \leq x \leq 1$ such that $R_j$ and  \translate($R_i, x\cdot \overrightarrow{v}$) intersect in their interior. For two distinct robots $R_i$ and $R_j$ that are simultaneously translated by vectors $\overrightarrow{v}$ and $\overrightarrow{u}$, respectively, we say that $R_i$ and $R_j$ \emph{collide} if there exists $0 \leq x \leq 1$ such that  \translate($R_i, x \cdot \overrightarrow{v}$) and  \translate($R_j, x \cdot \overrightarrow{u}$) intersect in their interior.

We think of $\R$ as a set of axis-aligned rectangular robots, where each robot is given by the rectangle of its starting position and the congruent rectangle of its desired final position. We assume that the starting rectangles (resp.~final destination rectangles) of the robots are pairwise non-overlapping (in their interiors). 
Let $\VVV=\{\overrightarrow{\theta_1}, \ldots, \overrightarrow{\theta_c}\}$, where $c \in \nat$, be a set of unit vectors. We assume that if a vector $\overrightarrow{\theta_i}$ is in $\VVV$ then the vector $-\overrightarrow{\theta_i}$ is also in $\VVV$. For a vector $\vec{v}$, we denote by $x(\vec{v})$ and $y(\vec{v})$ the $x$-component/coordinate (i.e., projection of $\vec{v}$ on the $x$-axis) and $y$-component/coordinate of $\vec{v}$, respectively.  

A \emph{valid serial schedule} (resp.~\emph{valid parallel schedule}) $\SSS$ for $\R$ w.r.t~$\VVV$ is a sequence of collision-free serial (resp.~parallel) moves, where each move in $\SSS$ is along a direction (resp.~a set of directions) in $\VVV$, and after all the moves in $\SSS$, each $R_i$ ends at its final destination, for $i \in [k]$. The \emph{length} $|\SSS|$ of the schedule is the number of moves in it. In this paper, we study the following problem:
 
\normalproblem{{\sc Rectangles Motion Planning} (\spcmp)}{A set of pairwise non-overlapping axis-aligned rectangular robots $\R=\{R_i \mid i \in [k]\}$ each given with its starting and final positions/rectangles; a set $\VVV$ of directions; $k, \ell \in \nat$.}{Is there a valid schedule for $\R$ w.r.t.~$\VVV$ of length at most $\ell$?}

We note that the time complexity for solving the above decision problem will be essentially the same (up to a polynomial factor) as that for solving its optimization version (where we seek to minimize $\ell$),  as we can binary-search for the length of an optimal schedule.

We also study The \textsc{Rectangles Coordinated Motion Planning} problem (\prect{}), which is defined analogously with the only difference that the moves could be performed in parallel. More specifically, the schedule of the robots consists of a sequence of moves, where in each move a subset $S$ of robots move simultaneously, along (possibly different) directions from $\VVV$, at the same speed provided that no two robots in $\R$ collide. The move ends when all the robots in $S$ reach their desired locations during that move; no new robots (i.e., not in $S$) can move during that time step.

 We focus on the restrictions of \spcmp{} and \ppcmp to instances in which the translations are axis-aligned,
 but we also extend our results to the case where the directions are part of the input (or are fixed). We also consider both settings where the rectangles move freely in the plane, and where their motion is confined to a bounding box. For a problem $P \in$ $\{\spcmp{},\ppcmp\}$, denote by 
 $\bigplus$-$P$ the restriction of $P$ to instances in which the translations are axis-aligned (i.e., $\VVV=\{\overrightarrow{H}^{-}, \overrightarrow{H}^{+}, \overrightarrow{V}^{-}, \overrightarrow{V}^{+}\}$), by $P$-$\msquare$ its restriction to instances in which the robots are confined to a bounding box (which we assume that it is given as part of the input instance), and by $\bigplus$-$P$-$\msquare$ the problem satisfying both constraints. For instance, $\bigplus$-\srect{}-$\msquare$ denotes the problem in which the motion mode is serial, the translations are axis-aligned, and the movement is confined to a bounding box.

In parameterized
complexity~\cite{CyganFKLMPPS15,DowneyFellows13}, the
running-time of an algorithm is studied with respect to a parameter
$k\in \nat$ and input size~$N$. The basic idea is to find a parameter
that describes the structure of the instance such that the
combinatorial explosion can be confined to this parameter. In this
respect, the most favorable complexity class is \FPT{}
(\textit{fixed-parameter tractable}) which contains all problems that
can be decided in time $f(k)\cdot
N^{\bigoh(1)}$, where $f$ is a computable function. Algorithms with
this running-time are called \emph{fixed-parameter algorithms}.  

The $\Oh^*()$ notation hides a polynomial function in the input size $N$, which is the length of the binary encoding of the instance.

\ifshort

\section{Upper Bounds on the Number of Moves}
\label{sec:upperbound}
In this section, we prove upper bounds -- w.r.t.~the number $k$ of robots -- on the number of moves in an optimal schedule for feasible instances of several of the problems under consideration in this paper. These upper bounds are crucial for obtaining the \FPT{} results.

\subsection{Motion in the Free Plane}
The upper bound in the case where the robots move in the free plane follow since, given at least two non-parallel directions, one can translate the rectangles, one by one, to very far and well-separated locations, and then reverse the process to bring them to their destinations:

\begin{proposition}[Appendix]
\label{prop:free}
Let $\I=(\R, \VVV, k, \ell)$ be an instance of \spcmp{} or \ppcmp. If $\VVV$ contains two non-parallel directions, then there is a schedule for $\I$ of length at most $4k$.
\end{proposition}

\subsection{Axis-Aligned Motion in a Bounding Box}

Let $\I=(\R, \VVV, k, \ell, \Gamma)$ be an instance of $\bigplus$-\srect{}-$\msquare$. Fix an ordering on the vertices of any rectangle (say the clockwise ordering, starting always from the top left vertex). For any two robots $R$ and $R'$, the \emph{relative order} of $R$ w.r.t.~$R'$ is the order in which the vertices of $R$, when considered in the prescribed order, appear relatively to the vertices of $R'$ (considered in the prescribed order as well), with respect to each of the $x$-axis and the $y$-axis.  
\begin{definition}
\label{def:configutration}
Fix an arbitrary ordering of the 2-sets of robots in $\R$. A \emph{configuration} of $\R$ is a sequence indicating, for each 2-set $\{R, R'\}$ of robots in $\R$, considered in the prescribed order, the relative order of $R$ with respect to $R'$. A \emph{realization} of a configuration $C$ is an embedding of the robots in $\R$ such that the relative order of any two robots in $\R$ conforms to that described by $C$ and the robots in the embedding are pairwise non-overlapping. 
\end{definition}

The following proposition shows that we can move between any two realizations of the same configuration using at most $2k$ translations:

\begin{proposition}[Appendix]\label{p:aaconfiguration} 
\label{lem:realizations} For any two realizations $\rho, \rho'$ of a configuration $C$, there is a sequence of at most $2k$ valid moves within the bounding box $\Gamma$ that translate the robots from their positions in $\rho$ to their positions in $\rho'$.
\end{proposition}
The above proposition implies that each configuration appears at most $2k$ times in an optimal schedule. By upper bounding the total number of distinct configurations, we get:
 
\begin{proposition}[Appendix]
\label{prop:boundedconfigurations}
Let $\I=(\R, \VVV, k, \ell, \Gamma)$ be a feasible instance of $\bigplus$-\srect{}-$\msquare$ or $\bigplus$-\prect{}-$\msquare$. Then there is a schedule for $\I$ of length at most $2k \cdot 5^{k(k-1)}$.  
\end{proposition}

We note that the obtained upper bounds on the length of the schedule of a feasible instance already suggest an approach for obtaining \FPT{} algorithms, namely that of enumerating all sequences of configurations and then checking their realizability and serializability. This approach results in much worse running times than the ones for the \FPT{} algorithms presented in this paper, is certainly not easier (as it involves checking realizability and serializability), and reveals less about the structure of the solutions than the presented results.    
\fi

\iflong
\section{Upper Bounds on the Number of Moves}
\label{sec:upperbound}
In this section, we prove upper bounds -- w.r.t.~the number of robots $k$ -- on the number of moves in an optimal schedule for feasible instances of several of the problems under consideration in this paper. Those upper bounds will be crucial for obtaining the \FPT{} results in later sections.

\subsection{Motion in the Free Plane}
The upper bound in the case where the robots move in the free plane is straightforward to obtain: 

\begin{proposition}
\label{prop:free}
Let $\I=(\R, \VVV, k, \ell)$ be an instance of \spcmp{} or \ppcmp. If $\VVV$ contains at least two non-parallel directions, then there is a schedule for $\I$ of length at most $4k$.

\end{proposition}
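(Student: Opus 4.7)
The plan is to exhibit an explicit schedule of length $4k$ that routes each robot through a private slot in a far-away staging region: two moves to park it and two moves to release it. The key observation is that, since $\VVV$ contains two non-parallel directions and, by the standing convention, $-\overrightarrow{\theta}\in\VVV$ whenever $\overrightarrow{\theta}\in\VVV$, the set $\VVV$ contains some $\{\overrightarrow{u},-\overrightarrow{u},\overrightarrow{v},-\overrightarrow{v}\}$ whose elements span $\Rspace^{2}$; hence any single desired translation of one robot can be realized by one move along $\pm\overrightarrow{u}$ followed by one move along $\pm\overrightarrow{v}$.

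First I fix such non-parallel $\overrightarrow{u},\overrightarrow{v}\in\VVV$ and an axis-aligned bounding box $B$ containing every starting and every target rectangle of $\R$. Far out along $\overrightarrow{u}$ I designate a thin \emph{staging column}, in which I place $k$ pairwise disjoint rectangles $T_1,\ldots,T_k$, congruent to $R_1,\ldots,R_k$ respectively, stacked along $\overrightarrow{v}$ with ample separation and lying entirely outside $B$. Phase~1 (parking) processes the robots in decreasing order of the $\overrightarrow{u}$-coordinate of their starting rectangles. For each $R_i$ in turn I perform two moves: a translation along $\overrightarrow{u}$ that slides $R_i$ into the staging column at its current $\overrightarrow{v}$-level, followed by a translation along $\overrightarrow{v}$ that brings it into $T_i$. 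The processing order ensures that during the first move every still-stationary robot lies strictly behind $R_i$ along $\overrightarrow{u}$, while every already-parked robot sits above $B$ in the staging column, outside the $\overrightarrow{v}$-level of $R_i$, so no collision occurs. Collision-freeness of the second move is secured by assigning the heights of the $T_i$ in \emph{reverse} processing order, so that the slot $T_i$ being filled always lies below every previously filled slot and the vertical trajectory through the staging column crosses no parked robot.

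Phase~2 (release) is symmetric: each $R_i$ is taken from $T_i$ to its final position using one $\overrightarrow{v}$-move followed by one $\overrightarrow{u}$-move, with the release order chosen so that, when $R_i$ re-enters $B$ along its target row, no already-released robot occupies the path. Summing the two phases yields $2k+2k=4k$ moves. Since every serial schedule of length $L$ is automatically a valid parallel schedule of length $L$, obtained by executing its moves one per time step, the same $4k$ bound applies to instances of both \spcmp{} and \ppcmp.

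The main technical obstacle will be to pin down the precise placement and spacings of $B$, the staging column, and the $T_i$, together with the two processing orders, so that the claimed collision-freeness goes through for every input instance. This is a bookkeeping exercise, made tractable by the complete freedom of the free plane to place $B$ and the $T_i$ arbitrarily far apart, and once it is discharged the $4k$ bound follows uniformly for both \spcmp{} and \ppcmp.
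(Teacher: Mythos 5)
Your overall plan is the same as the paper's (four moves per robot through a far-away staging region, plus the remark that a serial schedule is in particular a parallel one), but the step that actually carries the geometric content — extracting the robots one by one from the crowded start configuration without collision — is wrong as argued. You claim that processing robots in decreasing order of their $\overrightarrow{u}$-coordinate ensures that, during the $\overrightarrow{u}$-move of $R_i$, every still-stationary robot "lies strictly behind $R_i$ along $\overrightarrow{u}$" and hence cannot be hit. Being behind in $\overrightarrow{u}$-coordinate does not keep a robot out of the swept slab: the sweep of $R_i$ has width transverse to $\overrightarrow{u}$, and since $\VVV$ is an arbitrary set of directions (only two non-parallel ones are guaranteed), $\overrightarrow{u}$ is in general oblique to the axis-aligned rectangles, so a rectangle with a smaller $\overrightarrow{u}$-coordinate can still poke into that slab from the side. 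Concretely, let $\overrightarrow{u}=(1,1)/\sqrt{2}$, $R_1=[0,20]\times[0,1]$ and $R_2=[5,6]\times[1.1,1.5]$. These are interior-disjoint, and both the center projection and the maximal projection of $R_1$ onto $\overrightarrow{u}$ exceed those of $R_2$ ($10.5/\sqrt{2}>6.8/\sqrt{2}$ and $21/\sqrt{2}>7.5/\sqrt{2}$), so under any natural reading of your ordering $R_1$ moves first; yet $R_1+0.3\cdot(1,1)=[0.3,20.3]\times[0.3,1.3]$ already overlaps $R_2$ in the interior, so the very first move of your schedule collides. (The analogous ordering does work when the motion is axis-aligned, which is perhaps why it feels plausible, but the proposition concerns general $\VVV$.) The same defect recurs symmetrically in your Phase~2, when robots re-enter along their target rows past already-released robots. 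This is not deferred bookkeeping about spacings; it is the missing argument.

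The paper avoids this by making the choice of which robot to extract \emph{adaptive} and by choosing an escape direction whose sweep is confined to a quadrant: picking $\overrightarrow{v},\overrightarrow{w}\in\VVV$ of positive and negative slope (w.l.o.g.), there is always some robot whose bottom-left quadrant determined by its top-right vertex contains no other robot, and translating that robot along $-\overrightarrow{v}$ keeps its entire sweep inside that empty quadrant; repeating this extremal choice on the remaining robots gives the first $k$ moves, the same argument run backwards from the target configuration (along $-\overrightarrow{w}$) gives the last $k$ moves, and $2k$ middle moves connect the far-apart intermediate positions, where the robots effectively behave like points. Your construction can be repaired by replacing the fixed coordinate ordering with such an adaptive extremal selection (in both phases); as written, though, the collision-freeness of Phase~1 and Phase~2 does not follow from the stated ordering, so the $4k$ bound is not established.
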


\begin{proof}
 
Choose two non-parallel directions $\overrightarrow{v}$ and $\overrightarrow{w}$ in $\VVV$, and assume that $\overrightarrow{v}$ has a positive slope and $\overrightarrow{w}$ has a negative slope. The arguments for the other cases are analogous.
An easy observation shows that there is a robot $R_i \in \R$ such that the bottom-left quadrant defined by its top-right vertex does not overlap with any other robots. We can use a single move in the direction of $\pm \overrightarrow{v}$ to
translate the center of $R_i$ to a point that is extremely far from the other robots. By repeating this argument, we can separate every pair of robots by any arbitrary distance of our choice using $k$ moves. We can also perform this operation starting from the end positions following the direction $\pm \overrightarrow{w}$ and we can place the one-to-last
position of $R_i$ and the second position of $R_i$, say on concentric circles $C_i$ of distances arbitrarily-large from each other, see Figure~\ref{fig:freeplaneupperbound}. 
Hence, each robot $R_i$ has now two positions that need to be connected. This is possible with $2$ moves per robot. In total, we incur at most $4k$ moves to translate the robots in $\R$ from their starting positions to their destinations.

Note that we do not care about the geometric complexity or the number of bits required to encode the
intermediate positions as we only need to show an upper bound on the number of moves.

\end{proof}

\begin{figure}
    \centering
    \includegraphics[scale=0.75]{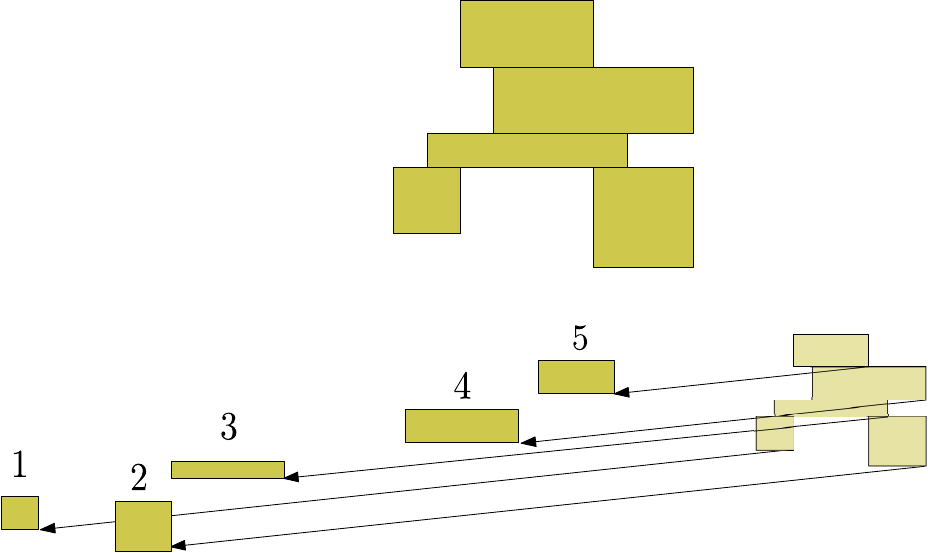}
    \caption{Top: robots at their starting position. Bottom: Robots (made smaller for visibility) can be separated by arbitrary large distance using a single direction. The numbers indicate the order in which the robots have to move. The figure is not at scale, as we might need very large distances to guarantee non-intersection in third and fourth moves of the robots.}
    \label{fig:freeplaneupperbound}
\end{figure}

\subsection{Axis-Aligned Motion in a Bounding Box}
In this section, we present upper bounds on the number of axis-aligned moves in the case where the (serial/parallel) axis-aligned motion takes place within an axis-aligned bounding box $\Gamma$, as opposed to the free plane. The crux of the proof is showing that if the instance is feasible (i.e., admits a schedule  within $\Gamma$), then it admits a schedule whose length is upper bounded by a computable function of the number $k$ of robots.   

Let $\I=(\R, \VVV, k, \ell, \Gamma)$ be an instance of $\bigplus$-\srect{}-$\msquare$. Fix an ordering on the vertices of any rectangle (say the clockwise ordering, starting always from the top left vertex). For any two robots $R$ and $R'$, the \emph{relative order} of $R$ w.r.t.~$R'$ is the order in which the vertices of $R$, when considered in the prescribed order, appear relatively to the vertices of $R'$ (considered in the prescribed order as well), with respect to each of the $x$- and $y$-axes. Note that the relative order of $R$ with respect to $R'$ determines, for each vertex $p$ of $R$, the relative position of the $x$-coordinate (resp.~$y$-coordinate) of $p$ with respect to the $x$-coordinate (resp.~$y$-coordinate) of each vertex of $R'$ (e.g, with respect to $\overrightarrow{x'x}$, the relative position of a vertex $p$ of $R$ w.r.t.~to a vertex $q$ of $R'$ indicates whether $x(p) \leq x(q)$).  

\begin{definition}
\label{def:configutration}
Fix an arbitrary ordering of the 2-sets of robots in $\R$. A \emph{configuration} of $\R$ is a sequence indicating, for each 2-set $\{R, R'\}$ of robots in $\R$, considered in the prescribed order, the relative order of $R$ with respect to $R'$. A \emph{realization} of a configuration $C$ is an embedding of the robots in $\R$ such that the relative order of any two robots in $\R$ conforms to that described by $C$ and the robots in the embedding are pairwise nonoverlapping in their interiors.\footnote{We can alternatively use the notion of order types instead of  configurations, but this would result in a much worse upper bound on the number of moves.}
\end{definition}

\begin{proposition}\label{p:aaconfiguration} 
\label{lem:realizations} For any two realizations $\rho, \rho'$ of a configuration $C$, there is a sequence of at most $2k$ valid moves within the bounding box $\Gamma$ that translate the robots from their positions in $\rho$ to their positions in $\rho'$.
\end{proposition}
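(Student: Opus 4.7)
The plan is to realize the transition from $\rho$ to $\rho'$ in two phases of at most $k$ moves each: first a horizontal phase that sends every robot to its target $x$-coordinate while keeping its $\rho$-$y$-coordinate, and then a vertical phase that sends every robot to its target $y$-coordinate. This yields at most $2k$ moves in total. For brevity, write $x_i^\rho,\, y_i^\rho$ (resp.\ $x_i^{\rho'},\, y_i^{\rho'}$) for the coordinates of (say) the bottom-left corner of $R_i$ in $\rho$ (resp.\ $\rho'$).

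First I would introduce the intermediate arrangement $\tilde\rho$ in which every robot $R_i$ sits at the position $(x_i^{\rho'},\, y_i^\rho)$, and verify that $\tilde\rho$ is a valid (interior-disjoint) realization of $C$ lying inside $\Gamma$. The key observation uses $C$: if two robots $R_i$ and $R_j$ have overlapping $x$-intervals in $\rho'$, then by definition of $C$ they also have overlapping $x$-intervals in $\rho$, and hence must be $y$-separated in $\rho$ (as $\rho$ is a valid realization); since $\tilde\rho$ inherits these $y$-coordinates, the two robots remain $y$-separated in $\tilde\rho$. The complementary case (disjoint $x$-intervals in $\rho'$) is immediate, and every corner is within $\Gamma$ because the coordinates come from either $\rho$ or $\rho'$.

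For the horizontal phase I would move the rightward-movers (robots with $x_i^\rho \le x_i^{\rho'}$) in order of non-increasing $x_i^{\rho'}$, and then the leftward-movers in order of non-decreasing $x_i^{\rho'}$; robots with $x_i^\rho = x_i^{\rho'}$ are skipped. To argue collision-freeness, consider the robot $R_i$ currently being moved and any other robot $R_j$: if $R_i$ and $R_j$ are $y$-separated in $\rho$ there is nothing to check, and otherwise $C$ forces them to be $x$-separated with the same orientation in both $\rho$ and $\rho'$. A case distinction on whether $R_j$ is still at $x_j^\rho$ or has already been relocated to $x_j^{\rho'}$, combined with the signs of the two horizontal displacements, reduces each subcase to a simple inequality: the ordering rule guarantees that the position of $R_j$ at the time of $R_i$'s move already lies on the correct side of $R_i$'s horizontal sweep. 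The vertical phase is then handled symmetrically, with $\tilde\rho$ and $\rho'$ (which agree on $x$-coordinates) playing the role of $\rho$ and $\tilde\rho$. Since every single move is contained in a segment whose endpoints lie inside $\Gamma$, the whole schedule stays inside $\Gamma$.

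The main obstacle is the case analysis in the horizontal phase (and its vertical twin): one has to rule out collisions both with robots that have already been moved and with robots still to move, under a mixture of rightward- and leftward-movers. The ordering rules together with the configuration $C$---which pins down the $x$-order of every $y$-overlapping pair in both realizations---are exactly what force every subcase to collapse to an elementary inequality, so the proof ultimately amounts to bookkeeping rather than any deeper argument.
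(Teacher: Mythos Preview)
Your proposal is correct and follows essentially the same two-phase strategy as the paper: first align all horizontal coordinates, then all vertical coordinates, processing the robots in a carefully chosen order so that the shared configuration $C$ rules out every potential collision. The only cosmetic differences are that the paper moves the left-movers before the right-movers and sorts by the current rather than the target $x$-coordinate, and it does not explicitly name the intermediate placement~$\tilde\rho$; your explicit verification that $\tilde\rho$ is itself a valid realization of $C$ is a nice touch that makes the symmetry between the two phases transparent.
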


\begin{proof}
The sequence of moves consists of two contiguous subsequences, the first in the horizontal direction and the second in the vertical direction.  For the first subsequence of moves, we partition the robots in $\R$ into two subsets: those whose positions in $\rho'$ have smaller $x$-coordinates than their positions in $\rho$, referred to as $\R_{left}$, and those whose positions in $\rho$ have larger or equal $x$-coordinates than those in $\rho$, referred to as $\R_{right}$. We sort the robots in $\R_{left}$ in a nondecreasing order of the $x$-coordinates of their left vertical segments, and those in $\R_{right}$ in a nonincreasing order of the $x$-coordinates of their right vertical segments. For each robot in $\R_{left}$, considered one by one in the sorted order, we translate it horizontally so that its left (or equivalently right) vertical segment is aligned with the vertical extension of its left (or equivalently right) vertical segment in $\rho'$. We do the same for the robots in $\R_{right}$. Since $\Gamma$ is rectangular, both $\rho$ and $\rho'$ are in $\Gamma$, and the translations are axis-aligned, it follows that all the aforementioned translations are within $\Gamma$. 

After each robot in $\R$ has reached its vertical extension in $\rho'$, we again partition $\R$ into two subsets, those whose top horizontal segments in $\rho'$ are above theirs in $\rho$, referred to as $\R_{above}$, and those whose top horizontal segments in $\rho'$ are below theirs in $\rho$, referred to as $\R_{below}$. We sort $\R_{above}$ in nonincreasing order of the their top horizontal segments (i.e., in decreasing order of their $y$-coordinates) and those in $\R_{below}$ in nondecreasing order of their bottom horizontal segments. We then route the robots in each partition in the sorted order to its horizontal line in $\rho'$, and hence to its final destination.

It is obvious that the above sequence of moves has length at most $2k$, and that at the end each robot is at its final destination in $\rho'$. Therefore, what is left is arguing that no move in the above sequence causes collision. It suffices to argue that for the subsequence of horizontal moves; the argument for the vertical moves is analogous. 

Let $R$ and $R'$ be two robots that are translated horizontally. Suppose first that both $R$ and $R'$ are moving in the same direction (either both in the direction of $\overrightarrow{x'x}$ or $\overrightarrow{xx'}$), say in the direction of $\overrightarrow{xx'}$. Assume, w.l.o.g., that $R$ is moved before $R'$, and hence, its left vertical segment has a smaller $x$-coordinate than that of $R'$. Since $\rho$ is a realization of $C$, and since we move the robots in the described sorted order, clearly $R$ cannot collide with $R'$ when $R$ was translated horizontally while $R'$ was stationary. Consider now the horizontal translation of $R'$. Since the relative position of $R$ and $R'$ are the same in $\rho$ and $\rho'$, and since the vertical segments of $R$ are now positioned on their vertical lines in $\rho'$, if $R'$ collides with $R$ during the translation of $R'$ then the relative position of the vertical segments of $R'$ with respect to those in $R$ (and hence, the relative positions of $R'$ w.r.t.~$R$) will have to change and will never change back. Since $R$ and $R'$ have the same relative order in $\rho$ and $\rho'$, this collision cannot occur. The argument is similar when $R$ and $R'$ are moving in opposite directions. This completes the proof.
\end{proof}

\begin{proposition}
\label{prop:boundedconfigurations}
Let $\I=(\R, \VVV, k, \ell, \Gamma)$ be a feasible instance of $\bigplus$-\srect{}-$\msquare$ or $\bigplus$-\prect{}-$\msquare$. Then there is a schedule for $\I$ of length at most $2k \cdot 5^{k(k-1)}$.  
\end{proposition}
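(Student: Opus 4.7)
The plan is to combine a combinatorial bound on the number of configurations (Definition~\ref{def:configutration}) with a pigeonhole-and-shortcut argument enabled by Proposition~\ref{p:aaconfiguration}. The key intuition is that Proposition~\ref{p:aaconfiguration} allows us to freely move between any two realizations of the same configuration using at most $2k$ moves that stay inside $\Gamma$; hence, in any optimal schedule, any two time steps whose configurations coincide must be close in time, and together with the bound on the number of distinct configurations this caps the schedule length.

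First I would bound the number of configurations. For any unordered pair of robots $\{R, R'\}$, the relative order along each axis is determined by how the two distinct $x$-coordinates (respectively $y$-coordinates) of $R$ interleave with those of $R'$, and a careful enumeration yields at most $5$ possibilities per axis, hence at most $25$ joint relative orders per pair. Summing across the $\binom{k}{2}$ pairs of robots gives a total of at most $25^{\binom{k}{2}} = 5^{k(k-1)}$ configurations.

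Next, let $\SSS$ be a minimum-length feasible schedule for $\I$, of length $T$, and for each $t \in \{0, 1, \ldots, T\}$ let $D_t$ denote the configuration of the robots after the $t$-th move of $\SSS$, with $D_0$ being the starting configuration. Suppose, toward a contradiction, that $D_{t_1} = D_{t_2}$ for some $t_1 < t_2$ with $t_2 - t_1 > 2k$. By Proposition~\ref{p:aaconfiguration}, the realizations of the robots at times $t_1$ and $t_2$ can be connected by a valid sequence of at most $2k$ moves that stays inside $\Gamma$. Splicing this sequence into $\SSS$ in place of the $t_2 - t_1 > 2k$ moves between times $t_1$ and $t_2$ yields a strictly shorter valid schedule for $\I$, contradicting the minimality of $\SSS$. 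Consequently, every two occurrences of the same configuration in $\SSS$ lie within $2k$ time steps of each other. Listing the distinct configurations $C_1, C_2, \ldots, C_m$ in the order of their first occurrence at times $0 = t_1 < t_2 < \cdots < t_m$, with $m \leq 5^{k(k-1)}$, and applying the shortcut argument to the first occurrence of $C_j$ and the latest time before $t_{i+1}$ at which $C_j$ appears (for some $j \leq i$), bounds the gap $t_{i+1} - t_i$ by $2k$; a similar bound applies to the tail after $t_m$. Summing yields $T \leq 2k \cdot m \leq 2k \cdot 5^{k(k-1)}$, as claimed.

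The main obstacle is verifying that the shortcut operation yields a valid schedule in both the serial setting $\bigplus$-\srect-$\msquare$ and the parallel setting $\bigplus$-\prect-$\msquare$, and that the inserted transfer stays inside $\Gamma$. Containment in $\Gamma$ is guaranteed by Proposition~\ref{p:aaconfiguration} itself. For the serial setting, the transfer schedule produced by Proposition~\ref{p:aaconfiguration} is serial and can be spliced in directly. For the parallel setting, the same serial transfer schedule can be reinterpreted as a parallel schedule in which exactly one robot moves in each time step (a degenerate, but valid, parallel schedule of length at most $2k$); hence the shortcut argument applies in both settings, and the bound follows uniformly.
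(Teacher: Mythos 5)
Your proposal is correct and follows essentially the same route as the paper: bound the number of configurations by $25^{\binom{k}{2}} = 5^{k(k-1)}$ and use Proposition~\ref{p:aaconfiguration} as a $2k$-move shortcut to argue that an optimal schedule cannot revisit a configuration after more than $2k$ steps. The only (minor) difference is that you run the shortcut argument directly on an optimal parallel schedule, viewing the $2k$-move serial transfer as a degenerate parallel schedule, whereas the paper dispatches the parallel case with a one-line reduction to the serial bound; your treatment is a slightly more explicit version of the same idea.
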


\begin{proof}
Since the upper bound on the number of serial moves is also an upper bound on the number of parallel moves, it suffices to prove the theorem for $\bigplus$-\srect{}-$\msquare$.

Suppose that $\I$ is feasible and consider an optimal schedule for $\I$. Observe that there are five possible relative positions of any two axis-aligned rectangles along each of the two axes, and thus at most twenty five possible relative orderings for any two robots. It follows that the total number of (distinct) configurations is at most $25^{{k \choose 2}}$, where ${k \choose 2}$ is the number of 2-sets of $\R$. Since the schedule under consideration is optimal, the same configuration can appear at most $2k$ times in the schedule; otherwise, by Proposition~\ref{lem:realizations}, the schedule could be ``shortcut'' by considering a sequence of at most $2k$ moves that would translate the robots from the realization of the first occurrence of such a configuration to its last occurrence, thus reducing the length of schedule and contradicting its optimality. It follows that there is a schedule for $I$ of length at most $2k \cdot 
25^{{k \choose 2}}=2k \cdot 5^{k(k-1)}$. 
\end{proof}

We note that the obtained upper bounds on the length of the schedule of a feasible instance already suggest an approach for obtaining \FPT{} algorithms, namely that of enumerating all sequences of configurations and then checking their realizability. This approach results in much worse running times than the ones for the \FPT{} algorithms presented in this paper, and reveals less about the structure of the solution than the presented results.  

\fi

\section{Axis-Aligned Motion}\label{sec:grid}
In this section, we prove a structural result about $\bigplus$-\srect{}. This result, in particular, and the upper bound on the number of moves imply that $\bigplus$-\srect{} is \FPT{} parameterized by the number $k$ of robots.
In brief, the structural result states that, in order to obtain an optimal schedule to an instance of $\bigplus$-\srect, it is enough to restrict the robots to move along the lines of an axis-aligned grid (i.e., a collection of horizontal and vertical lines of the plane), that can be determined from the input instance. Moreover, the number of lines in the grid is a computable function of the number of robots, and the robots' moves will be defined using intersections of the grid lines. 
 
\begin{definition}
\label{def:grid}
Let $\I=(\R, \VVV, k, \ell)$ be an instance of $\bigplus$-\srect{}. We define an axis-aligned grid $G_{\I}$, associated with the instance $\I$, as follows. 

\begin{itemize} 
\item Initialize $G_{\I}$ to the set of horizontal and vertical lines through the starting and final positions of the centers of the robots in $\R$; call these lines the \textit{basic} grid lines. 
\item Add to $G_{\I}$ the lines which are defined using ``stackings'' of robots on the basic lines as follows; see Figure~\ref{fig:stacking}. Let $b \in G_{\I}$ be a vertical basic line with $x$-coordinate $x(b)$, and $w_b$ be the width of the robot whose center could be on $b$. For each number $1\leq i \leq \ell$, and each $i$-multiset $\{R_{j_1}, \ldots, R_{j_i} \}$ of robots, and for each choice of a horizontal width $w$ of a robot, add to $G_{\I}$ the two vertical lines with $x$-coordinates $x(b) \pm (w_b/2 + w/2 + \sum_{r=1}^{i} x(R_{j_r}))$. 
\item Add to $G_{\I}$ the analogous lines for the horizontal basic lines.  
\end{itemize}
\end{definition}

\begin{figure}
    \centering
    \includegraphics[scale=0.6]{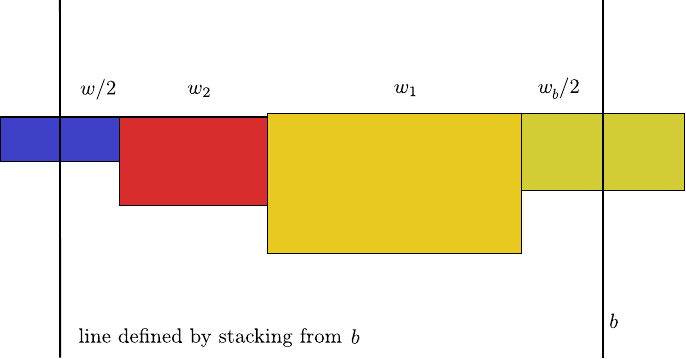}
    \caption{An illustration of a stacking to define new vertical lines.}
    \label{fig:stacking}
\end{figure}

\begin{theorem}\label{t:aastructure}
    Every instance $\I=(G, \R, k, \ell)$ of $\bigplus$-\srect{} has an optimal schedule in which every robot's move is between two grid points along a grid line in $G_{\I}$. The number of vertical (resp.~horizontal) lines in $G_{\I}$ is at most $k^3 \cdot 2^{k+\ell+1}$. 
\end{theorem}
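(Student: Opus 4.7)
The plan is to start from an arbitrary optimal schedule $\SSS=(m_1,\ldots,m_t)$ of length $t\le 4k\le \ell$ (guaranteed by Proposition~\ref{prop:free}) and transform it into an optimal schedule in which, between every two consecutive moves, the center of every robot lies at the intersection of a horizontal and a vertical line of $G_{\I}$. Only the stopping positions of the moves need to be controlled, since every axis-aligned move automatically travels along a line parallel to a grid direction once the endpoints are grid-aligned.

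The core of the argument would be a three-way classification of why a move terminates where it does. For a move $m_i$ translating a robot $R$ in the positive $x$-direction (the other directions are symmetric), the post-move $x$-coordinate of $R$'s center is forced by exactly one of: (a) it equals the $x$-coordinate of $R$'s final destination, which is a basic grid line of $G_{\I}$ by Definition~\ref{def:grid}; (b) $R$ is blocked by some other robot $R'$ sitting at its current position, forcing $x(R)=x(R')-w(R')/2-w(R)/2$; or (c) the terminating $x$-coordinate lies in an open interval of feasible values, not forced by either (a) or (b).

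I would then carry out a forward induction on $i$ showing that, after a suitable modification of $\SSS$, every robot's center coordinates lie on $G_{\I}$ at every intermediate time step. Case (a) is immediate. For case (b), the inductive hypothesis gives that $x(R')$ is already expressible as a basic $x$-coordinate plus a signed sum of half-widths indexed by an $(i-1)$-multiset of robots; adding the term $-w(R')/2-w(R)/2$ extends the multiset by one and matches an entry introduced in the stacking construction of Definition~\ref{def:grid} (the recursion depth stays at most $t\le \ell$). For case (c), the stopping position can be chosen anywhere in an open interval without affecting the collision-freeness of $m_i$ or the reachability of $R$'s destination, so it can be snapped to a line of $G_{\I}$ by \emph{shrinking} the swept interval back from its original endpoint to the first grid line encountered. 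Subsequent moves are then re-examined under the same classification, and any free endpoints downstream are re-snapped, preserving the invariant.

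For the size bound, I would count the basic lines ($2k$ vertical and $2k$ horizontal) and the stacking lines, the latter indexed by (basic line, $i$-multiset of robots with $1\le i\le\ell$, robot width, sign). Using the hockey-stick identity $\sum_{i=1}^{\ell}\binom{k+i-1}{i}\le \binom{k+\ell}{\ell}\le 2^{k+\ell}$ and folding the remaining polynomial-in-$k$ factors into the exponent, a routine calculation delivers the claimed $2^{k+\ell+2}$ bound.

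The delicate step will be making case (c) airtight: after snapping a free-ending move to a grid line, a later move (of $R$ or of another robot) may have implicitly depended on the original exact position. My approach is to always snap in the shrinking direction, so that the swept region of $m_i$ is contained in its original swept region and no new collision during $m_i$ is created, and to defer the re-snapping of downstream moves to later induction steps. This way each local rewrite preserves validity, reaches the prescribed final configuration, and pushes the grid-alignment invariant forward without requiring global bookkeeping.
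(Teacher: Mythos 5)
Your classification of stopping positions into (a) destination, (b) blocked by a stationary robot, and (c) ``free'' is where the argument breaks down, and the gap is precisely in case (c). A stopping position that is not forced by (a) or (b) at the time of move $m_i$ can still be forced \emph{from the future}: the robot may have to travel at least that far so that a later move of a different robot can thread through the space it vacates (or so that its own later vertical move happens in a column that stays clear). If you snap such a move back (``shrink'' it) to the nearest grid line, no collision is created during $m_i$ itself, but the robot now sits further back during the whole time interval until its next move, and a downstream move of another robot that originally slid past it can now collide with it. So the claim that ``each local rewrite preserves validity'' is false in general, and deferring the problem to ``re-snapping downstream moves'' does not help: the downstream schedule may simply have become infeasible, and there is no guarantee that the interval of stopping positions compatible with all future moves contains a line of $G_{\I}$ at all unless one argues for it. Capturing these future-dependent constraints is exactly why $G_{\I}$ is built from \emph{chains} of stackings of depth up to $\ell$, and your forward induction with purely local, deferred fixes never establishes that the relevant chain terminates on a basic line.

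The paper's proof resolves this with a different mechanism that your proposal is missing. It inducts \emph{backwards} on the number of moves: it applies the induction hypothesis to the residual instance $\I'$ in which the first-moving robot $R$ starts at its post-move position $p_2$, and then pushes $R$ back from $p_2$ towards $p_1$ while simultaneously translating, by the same amount, all grid lines of $G_{\I'}$ that are defined relative to $p_2$ (the set $L_2$) together with every downstream move segment lying on them. Collisions therefore cannot be created by the push-back except when some translated move abuts a line obtained by stacking over $G_{\I'}\setminus L_2$; at that moment the corresponding robot's center lies on a line of $G'\subset G_{\I}$, and an extremal choice (among residual grid schedules, one using the maximum number of lines of $G'$) turns this into a contradiction, so the push-back reaches $p_1$. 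Your proposal has no analogue of either ingredient --- the co-translation of dependent moves, or the extremal/exchange argument guaranteeing that the process lands on a line of the instance grid --- and without them case (c) cannot be made airtight. (The multiset bookkeeping in your case (b) and the counting of lines are essentially fine and match the paper's computation, so the counting half is not the issue.)
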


\begin{proof}
We argue by induction on the number $\ell'$ of moves in the schedule. If $\ell'=1$, then the schedule has a single move that must be along a line defined by both the starting and ending positions of a robot in $\R$, and the statement is true in this case.
Thus, assume henceforth that the statement of the theorem is true when the optimal schedule has at most $\ell'-1$ moves. Let $R$ be the robot that
performs the first move (in the schedule) from some point $p_1$ to some point $p_2$, and assume, w.l.o.g., that the move is horizontal in the direction of $\overrightarrow{xx'}$. We define a new problem instance $\I'$, which is the same as $\I$, with the exceptions that in $\I'$ the robot $R$ now has starting position $p_2$ and the upper bound on the number of moves is $\ell'-1$.
 Let $G_{\I}$ and $G_{\I'}$ be the grids associated with instances $\I$ and $\I'$, respectively, as defined in Definition~\ref{def:grid}.
 The instance $\I'$ has a schedule of $\ell'-1$ moves and hence, there is a schedule for $\I'$ such that each robot moves along a grid line in $G_{\I'}$. 

The lines in the set $L_2  = G_{\I'}-G_{\I}$ are basic vertical grid lines defined by $R$ being at $p_2$ plus all the lines defined by stackings of these lines. Note that $L_2$ contains only vertical lines and that $G_{\I'}-L_2 \subseteq G_{\I}$. 
Let $G'$ be the set of grid lines in $G_{\I'}-L_2$ union the set of vertical lines obtained by stacking every robot on every line in $G_{\I'}-L_2$. Observe that $G' \subset G_{\I}$, and that we are allowed to perform this additional stacking operation since the construction of $G_\I$ involves $\ell'$ stackings, whereas the construction of  $G_\I'$ involves $\ell'-1$ stackings.

From among all schedules of length $\ell'-1$ for $\I'$ along the grid lines $G_{\I'}$, consider a schedule that uses the maximum number of grid lines from $G'$.

Note that the move of $R$ from $p_1$ to $p_2$ and the schedule for $\I'$ give an optimal schedule for the original problem instance $\I$; however, this schedule is not along the grid lines of $G_{\I}$, and some robots may move along the lines in $L_2$.

Let $M$ be the set of segments of the grid lines in $L_2$ traversed by robot moves that are along the lines of $L_2$; note that all of these are vertical.
Now we push back the robot $R$ from $p_2$ towards $p_1$. Let $R(p)$ denote the robot $R$ located at point $p \in p_1p_2$. The move from $p_1$ to $p$ remains valid, however, there might be intersections between $R(p)$ and other robots in future moves, and between this move and future positions of $R$ itself. 

When $p=p_2$, we have a schedule, and thus no collisions exist. From the construction of the grid lines, it follows that when the robot $R$ moves, the (updated) grid lines in $L_2$ move by exactly the same distance in the same direction. We now move all the segments of $M$ also in the same direction and distance, i.e., we move the grid lines in $L_2$ together with all the robot moves along them; see Figure~\ref{fig:illustration}. As we push back $R$ towards $p_1$, we stop the first time that the right edge of some robot, say $Q$, that travels along a segment in $M$, hits a vertical line that is defined by the left edge of a robot $Q'$ located on a line in $G_{\I'}-L_2$, and hence cannot be pushed further without potentially introducing a collision. Now the center of $Q$ is positioned at a line that is defined by some stacking of $G_{\I'}-L_2$, and hence is a vertical line of $G'$. Since we have not introduced any collisions during the pushing, we have obtained a grid schedule that uses more grid lines from $G'$, which contradicts the maximality of the chosen schedule for $\I'$. Therefore, there is a schedule in which all moves are along $G'$.

\begin{figure}

    \centering
    \includegraphics{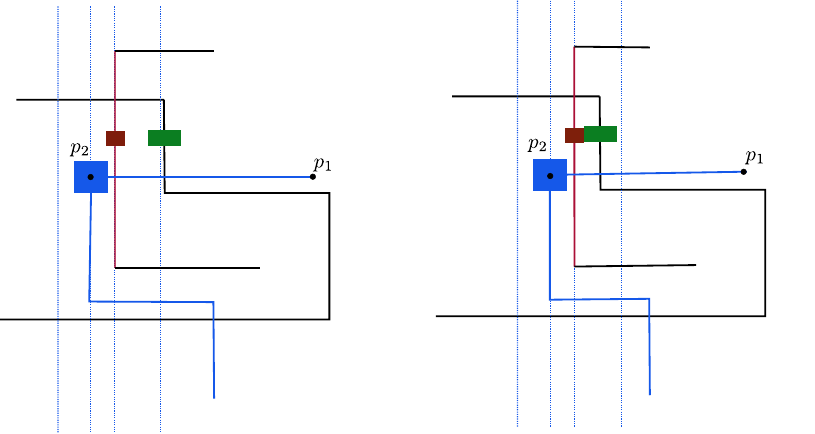}
    \caption{An illustration for the proof of Theorem~\ref{t:aastructure}. The blue broken line is the path traversed by the center of $R$ in a schedule. The black broken lines are the paths for the centers of other robots. The vertical dashed blue lines are grid lines in $L_2$. The red segment is a move along a line in $L_2$. The blue rectangle is $R$ after the first move, and the green and red rectangles are some other robots at some time during their movement. After moving the lines in $L_2$ to the right, the red segment becomes a stacking over lines not in $L_2$.}
    \label{fig:illustration}
\end{figure}

Finally, we  upper bound the number of lines in the grid. We upper bound the number of vertical lines; the upper bound on the number of horizontal lines is the same. Let ${\cal L}_V$ be the set of vertical lines in the grid, and 
${\cal L}_H$ be that of the horizontal lines.

The number of starting and ending positions of the robots is $2k$, and hence the number of basic vertical lines is at most $2k$. For each $i$, where $1 \leq i \leq \ell$, and for each basic vertical line $b$, we fix two robots: the one of width $w_b$ whose center could fall on $b$ and the one of width $w$ whose center could fall on the newly-defined vertical line based on the stacking. There are ${k \choose 2} \leq k^2/2$ choices for these two robots. Afterwards, we enumerate each selection of an $i$-multiset of robots, and for each $i$-multiset $\{R_{j_1}, \ldots, R_{j_i}\}$, we add the two vertical lines with offset $(w_l/2 + w/2 + \sum_{k=1}^{i} x(R_{j_k}))$ to the left and right of $b$. Note that this offset is determined by the two fixed robots and the $i$-multiset of robots, and hence by the two fixed robots and the set of robots in this multiset together with the multiplicity of each robot in this set. Therefore, in step $i$, to add the vertical lines, we can enumerate every 2-set of robots, each $r$-subset of robots, where $r \leq i$, order the robots in the $r$-set, and for each of the at most $2^k$ $r$-subsets of robots, enumerate all partitions $(\alpha_1, \ldots, \alpha_r)$ of $i$ into $r$ parts. The number of such partitions is at most $2^i$. Hence, the number of vertical lines we add in step $i$ is at most $2k \cdot k^2/2 \cdot 2^k \cdot 2^i$, and the total number of vertical lines we add over all the $\ell$ steps is at most $k^3\cdot 2^{k}\cdot \sum_{i=1}^{\ell} 2^i \leq k^2 \cdot 2^{k}\cdot   2^{\ell+1} =k^3 \cdot 2^{k+\ell+1}$. It follows that $|{\cal L}_V| \leq k^3 \cdot 2^{k+\ell+1}$ and 
$|{\cal L}_H| \leq k^3 \cdot 2^{k+\ell+1}$ as well.
Therefore, the total number of lines in the grid is $\Oh(k^3 \cdot 2^{k+ \ell})$. 
\end{proof}

\ifshort
\begin{theorem}[Appendix]
\label{thm:maingrid}
$\bigplus$-\srect{}, parameterized by the number of robots, can be solved in $\Oh^*(k^{16k} \cdot 2^{20k^2+8k})$ time, and hence is \FPT. 
\end{theorem}
\fi
 \iflong
\begin{theorem}
\label{thm:maingrid}
$\bigplus$-\srect{}, parameterized by the number of robots, can be solved in $\Oh^*(k^{16k} \cdot 2^{20k^2+8k})$ time, and hence is \FPT. 
\end{theorem}
\fi
\ifshort
\begin{proof}
By the above theorem and by Proposition~\ref{prop:free}, the number of horizontal/vertical lines in the grid is at most $k^{3} \cdot 2^{5k+1}$. The algorithm enumerates all potential schedules of length at most $\ell \leq 4k$ along the grid. The running time of the algorithm is $\Oh^*(k^{16k} \cdot 2^{20k^2+8k})$. 
\end{proof}
\fi

\iflong
\begin{proof}
By the above theorem and by Proposition~\ref{prop:free}, the total number of horizontal/vertical lines in the grid is $\Oh^*(k^{3} \cdot 2^{5k+1})$. The algorithm enumerates all potential schedules of length at most $\ell \leq 4k$ along the grid. To do so, the algorithm enumerates all sequences of at most $\ell \leq 4k$ moves of robots along the grid lines. To choose a move, we first choose a robot from among the $k$ robots, then choose the direction of the move (i.e., horizontal or vertical), and finally choose the grid line in either ${\cal L}_V$ (if the move is horizontal) or ${\cal L}_H$ (if the move is vertical) that the robot moves to. The number of choices per move is at most $k \cdot 2 \cdot k^3 2^{5k+1}$. It follows that the number of sequences that the algorithm needs to enumerate in order to find a desired schedule (if it exists) is at most $(k^4 \cdot 2 \cdot 2^{5k+1})^{\ell} \leq (k^4 \cdot 2^{5k+2})^{4k}$ by Proposition~\ref{prop:free}. Since the processing time over any given sequence is polynomial, it follows that the running time of the algorithm is $\Oh^*(k^{16k} \cdot 2^{20k^2+8k})$. 
\end{proof}
\fi
The following result is also a byproduct of our structural result, since one can, in polynomial time, ``guess'' and ``verify'' a schedule of length at most $\ell$ to an instance of $\bigplus$-\srect{} based on the grid corresponding to the instance:

\begin{corollary}
$\bigplus$-\srect{} is in \NP. 
\end{corollary}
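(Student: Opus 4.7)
The plan is to exhibit a polynomial-size certificate together with a polynomial-time verifier. First, by Proposition~\ref{prop:free}, if a feasible instance has an $\ell \ge 4k$, it also admits a schedule of length at most $4k$; so we may assume $\ell \le 4k$ when looking for a witness. By Theorem~\ref{t:aastructure}, whenever a feasible schedule of length at most $\ell$ exists, there is one in which every move ends on a line of the instance grid $G_{\I}$.

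Next, I would describe the certificate succinctly. A certificate consists of at most $4k$ moves, each encoded by (i) the index of the robot that moves, (ii) the axis-aligned direction from $\VVV$, and (iii) a symbolic description of the target grid line: a pointer to one of the $2k$ basic lines, the index of a second ``pivot'' robot used in the stacking, a sign ($+$ or $-$), and an $i$-multiset of robots ($i \le \ell \le 4k$) whose widths are summed to define the offset, exactly as in Definition~\ref{def:grid}. Each such symbolic description has size polynomial in $k$ and in the input, so the whole certificate is of polynomial size even though $G_{\I}$ itself has exponentially many lines.

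Then I would verify the certificate. For each move, the verifier reconstructs the target coordinate by performing the arithmetic prescribed by the stacking description; since this is a sum of $\Oh(k)$ coordinates and widths read from the input, the resulting rational number has bit complexity polynomial in the input. The verifier then simulates the schedule step by step, updating the active robot's rectangle and, for each step, checking that the swept rectangle (from the previous to the new position) does not meet the interior of any other robot at its current position, and finally that every robot's rectangle coincides with its prescribed destination. All these tests reduce to comparisons of rationals of polynomial bit complexity and are executable in polynomial time.

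The main obstacle -- and what makes the corollary nontrivial -- is the potential arithmetic complexity of intermediate positions in an arbitrary optimal schedule: without a structural restriction there is no a priori bound on the bit complexity (let alone the rationality) of the coordinates, and one cannot guess a polynomial certificate for the schedule directly. Theorem~\ref{t:aastructure} removes this obstacle by confining every move to a grid line that is definable by $\Oh(k)$ input quantities, which is exactly what allows both a short guess and an efficient verification.
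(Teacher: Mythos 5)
Your proof is correct and follows essentially the same route as the paper, which establishes the corollary in one sentence by noting that, thanks to Theorem~\ref{t:aastructure} (together with the $4k$ bound of Proposition~\ref{prop:free}), one can guess and verify in polynomial time a schedule along the instance grid. Your write-up simply makes explicit the succinct encoding of grid lines via stackings and the polynomial-time verification, which is exactly the intended argument.
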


The above corollary will be complemented with Theorem~\ref{thm:nphard} in Section~\ref{sec:nphardfixeddirections} to show that $\bigplus$-\srect{} is \NP-complete.

\ifshort
\section{An \FPT{} Algorithm When the Directions are Given}
\label{sec:general}
In this section, we give an \FPT{} algorithm for the case of axis-aligned rectangles that serially translate along a given (i.e., part of the input) fixed-cardinality set of directions. We first start by discussing the case where the robots move in the free plane, and then explain how the algorithm extends to the case where the robots are confined to a bounding box.

Let $\I=(R, \VVV, k, \ell)$ be an instance of \srect{}, where $\R=\{R_1, \ldots, R_k\}$ is a set of axis-aligned rectangular robots, and $\VVV=\{\overrightarrow{\theta_1}, \ldots, \overrightarrow{\theta_c}\}$, where $c \in \nat$ is a constant, is a set of unit vectors; we assumed herein that $c$ is a constant, but in fact, the results hold for any set of directions whose cardinality is a function of $k$.
Let $(s_{i}^{1}, s_{i}^{2})$ be the coordinates of the initial position of the center of $R_i$ and $(t_{i}^{1}, t_{i}^{2})$ be those of its final destination. 
We present a nondeterministic algorithm for the problem that makes a function of $k$ many guesses. The purpose of doing so is two fold. First, it serves the purpose of proving that \srect{} is in \NP{} since the nondeterministic algorithm runs in polynomial time (assuming that $|\VVV|$ is a constant or polynomial in $k$). Second, it will render the presentation of the algorithm much simpler. We will then show in Theorem~\ref{thm:lpfpt} how to make the algorithm deterministic by enumerating all possibilities for its nondeterministic guesses, and analyze its running time. 

The algorithm consists of three main steps: (1) guess the order in which the $k$ robots move in a schedule of length $\ell$ (if it exists); (2) guess the direction (i.e., the vector in $\VVV$) of each move; and (3) use Linear Programming (LP) to check the existence of corresponding amplitudes for the unit vectors associated with the $\ell$ moves that avoid collision.

We start by guessing the exact length, w.l.o.g.~call it $\ell$ (since it is a number between $0$ and $\ell$), of the  schedule sought. We then guess a sequence of $\ell$ events ${\cal E}=\langle e_1, \ldots, e_{\ell}\rangle$, where each event is a pair $(R_i, \overrightarrow{v_j})$, $i \in [k], j \in [c]$, that corresponds to a move/translation of a robot $R_i \in \R$ along a vector $\overrightarrow{v_j} \in \VVV$ in the sought schedule. 
The remaining part of the algorithm is to check if there is a schedule of length $\ell$ that is ``faithful'' to the guessed sequence ${\cal E}$
of events. 
That is, a schedule in which the robots' moves, and the translation in each move, correspond to those in ${\cal E}$.  
To do so, we will resort to LP. Basically, we will rely on LP to give us the exact translation vector (i.e., the amplitude) in each event $e_i$, $i \in [\ell]$, while ensuring no collision, in case a schedule of length $\ell$ exists. 

For each event $e_i=(R, \overrightarrow{v})$, $i \in [\ell]$, we introduce LP variables $x_{i}, y_{i}$ to encode the coordinates $(x_{i}, y_{i})$ of the center of $R$ at the beginning of the event. We also introduce an LP variable $\alpha_i > 0$ that encodes the amplitude of the translation of $R$ in the direction $\overrightarrow{v}$ in $e_i$. 

We form a set of LP instances such that the feasibility of one of them would produce the desired schedule, and hence, would imply a solution to instance $\I$. We explain next how this set of LP instances is formed. The LP constraints will stipulate the following conditions:

\begin{itemize}

\item[(i)] Each robot ends at its final destination.

\item[(ii)] Each robot starts at its initial position, and the starting position of robot $R_i$ in $e_q=(R_i, \overrightarrow{v}$) is the same as its final position after $e_p=(R_i, \overrightarrow{v'})$, where $e_p$ is the previous event to $e_q$ in ${\cal E}$ involving $R_i$ (i.e., $p$ is the largest index smaller than $q$). 


\item[(iii)] The translation in each $e_i$ is collision free.

\end{itemize}

 \begin{figure}[htbp]
 \centering
 
\begin{tikzpicture}

    \node at (1, 3)   (a){};
    \node[label=$a$] at (1, 2.9) {};
    \node at (1, 1)   (b){};
    \node[label=$b$] at (1, 0.4) {};
    \node at (4, 1)   (c) {};
    \node[label=$c$] at (4, 0.4) {};
    \node at (4, 3)   (d) {};
    \node[label=$d$] at (4, 2.9) {};

\node at (6, 6)   (a') {};
\node[label=$a'$] at (6, 5.9) {};

    \node at (6, 4)  (b'){};
    \node[label=$b'$] at (6,3.4) {};
    \node at (9, 4)  (c') {};
    \node[label=$c'$] at (9.2, 3.5){};
    \node at (9, 6)   (d') {};
    \node[label=$d'$] at (9.1, 5.85){};

       \node at (2.5, 2)   (o){$o$};
    \node at (7.5, 5)   (o'){$o'$};

                    \draw[line width =1mm] (1,3) -- (1,1);
                    \draw[line width =1mm] (1,1) -- (4,1);
  \draw[dashed, line width =0.5mm] (4,1) -- (4,3);
   \draw[dashed, line width =0.5mm] (1,3) -- (4,3);
   \draw[line width =1mm] (6,6) -- (9,6);
   \draw[line width =1mm] (9,4) -- (9,6);
    \draw[line width =1mm] (4,1) -- (9,4);
    \draw[line width =1mm] (1,3) -- (6,6);
   \draw[dashed, line width =0.5mm] (6,4) -- (9,4);
   \draw[dashed, line width =0.5mm] (b') -- (a');

     \draw[->, ultra thick, blue,  arrows={-latex}]  (o) -- (o');
 
\end{tikzpicture}
 
\caption{Illustration of the trace of a rectangle $abcd$ with respect to a vector $\overrightarrow{v}=\overrightarrow{oo'}$. Rectangle $a'b'c'd'=$  \translate($abcd, \overrightarrow{v})$ and the polygon $abcc'd'a'$, shown with solid lines, is \trace($abcd, \overrightarrow{v}$). Observe that the edges of a trace are either edges of the rectangles, or are parallel to $\overrightarrow{v}$.}
\label{fig:trace}
\end{figure}
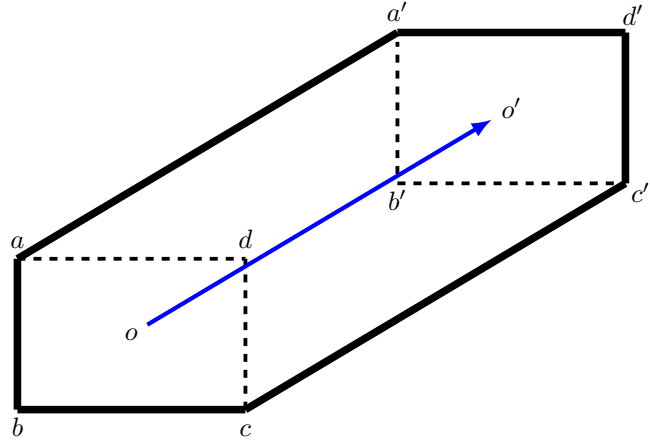

Conditions (i) and (ii) are easy to enforce using linear constraints (see Appendix).
We next discuss how the condition in $(iii)$ can be enforced. For a robot $R$ and a vector $\overrightarrow{v}$, denote by \emph{trace}($R, \overrightarrow{v}$) the boundary of the polygonal region of the plane covered during the translation of $R$ by the vector $\overrightarrow{v}$; see Figure~\ref{fig:trace} for illustration. It is clear that  \trace($R, \overrightarrow{v}$) is a polygon whose edges are either edges of $R$, or edges of  \translate($R, \overrightarrow{v})$, or line segments formed by a vertex of $R$ and a vertex of  \translate($R, \overrightarrow{v})$ whose slope is equal to that of $\overrightarrow{v}$. Therefore, if $R$ and $\overrightarrow{v}$ are fixed, then the slope of each edge of \trace($R, \overrightarrow{v}$) is fixed as well (i.e., independent of the LP variables). 

Now observe that robot $R_j \in \R$ does not collide with $R \in \R$ during the translation of $R$ by a vector $\alpha \cdot \overrightarrow{v}$, where $\overrightarrow{v} \in \VVV$, if and only if  no edge of $R_j$ and an edge of \trace($R, \alpha \cdot \overrightarrow{v}$) intersect in their interior. To stipulate that event $e_i=(R, \overrightarrow{v})$ is collision free, for each pair of edges $(pq, rs)$, where $pq$ is an edge of \trace($R, \alpha \cdot \overrightarrow{v}$) and $rs$ is an edge of $R_j$, we would like to add a linear constraint stipulating that the interiors of $rs$ and $pq$ do not intersect. If the slopes of the straight lines determined by $pq$ and $rs$ are the same, which we could check since the two slopes are given/fixed, then no such constraint is needed for this pair. Suppose now that the two straight lines $(rs)$ and $(pq)$ intersect at a point $\eta = (x_0, y_0)$; we  add linear constraints to stipulate that point $\eta$ does not lie in the interior of both segments $rs$ and $pq$, and hence the two segments do not intersect. To do so, we guess (i.e., branch into) one of the following four cases. Let $r=(x_r, y_r), s=(x_s, y_s), p=(x_p, y_p), q=(x_q, y_q)$ and assume, without loss of generality, that $x_p \leq x_q$ and that $x_r \leq x_s$.

\noindent {\bf Case (1):} Point $\eta$ is exterior to $pq$ and $x_0 \leq x_p$.

  \noindent {\bf Case (2):} Point $\eta$ is exterior to $pq$ and $x_0 \geq x_q$.

\noindent {\bf Case (3):} Point $\eta$ is exterior to $rs$ and $x_0 \leq x_r$.

\noindent {\bf Case (4):} Point $\eta$ is exterior to $rs$ and $x_0 \geq x_s$.

Note that $pq$ and $rs$ do not intersect in their interior if and only if (at least) one of the above cases holds. The algorithm guesses which case of the above four holds, and adds to the LP linear constraints stipulating the conditions of the guessed case. For instance, suppose that the algorithm guesses that Case (1) holds. Let $\beta, \gamma$ be the slopes of lines $(pq)$ and $(rs)$, respectively, and note that $\beta$ and $\gamma$ are known/fixed at this point . It is easy to verify that $x_0=(y_s-y_p +\beta x_p -\gamma x_s)/(\beta-\gamma)$. Therefore, to enforce the conditions in Case (1), we add to the LP the linear constraint:

\[ (y_s-y_p)+\beta x_p -\gamma x_s \leq (\beta-\gamma) x_p.\]

For each event $e_i=(R, \overrightarrow{v})$, and for each robot $R_j \in \R$, where $R_j \neq R$, and for each pair of edges $(pq, rs)$, where $pq$ is an edge of \trace($R, \alpha \cdot \overrightarrow{v}$) and $rs$ is an edge of $R_j$, the algorithm guesses which case of the above four cases applies and adds the corresponding linear constraint. The algorithm then solves the resulting LP. If the LP has a solution, then so does the instance $\I$. If the LP is not feasible, then the algorithm rejects the instance.

\begin{theorem}[Appendix]
\label{thm:lpfpt}
Given an instance $(\R, \VVV, k, \ell)$ of \srect{}, in time $\Oh^*((4^{32k} \cdot k \cdot |\VVV|)^{4k})$, we can construct a solution to the instance or decide that no solution exists, and hence \srect{} is \FPT.
\end{theorem}

\begin{corollary}\label{cor:NP}
     \srect is in \NP.
 \end{corollary}

\begin{proof}
The number of guesses made by the nondeterministic algorithm is polynomial.
\end{proof}

Next we discuss $\srect{}$-$\msquare$, in which the robots are confined to a bounding box. In this case, the problem becomes \PSPACE-hard as we observe in Section~\ref{sec:nphardfixeddirections}. It is easy to see that the LP part of the above approach can be easily modified to work for any rectangular bounding box by adding linear constraints stipulating that all rectangles resulting from the translations are confined to the box. (Basically, we only need to add constraints stipulating that the $x$/$y$-coordinate of each point are within the  vertical/horizontal lines of the bounding box.) The only issue is upper bounding the number of moves, $\ell$, in a feasible schedule. 

For the case of axis-aligned motion, that is, $\bigplus$-\srect{}-$\msquare$, Proposition~\ref{prop:boundedconfigurations} provides us with an upper bound of $2k \cdot 5^{k(k-1)}\leq 2k \cdot 5^{k^2}$ on $\ell$ in case the instance is feasible. Note that if the instance is not feasible, then the algorithm will end up rejecting the instance.  

\begin{theorem}
\label{thm:mainboxaxisaligned}
Given an instance $(\R, \VVV, k, \ell, \Gamma)$ of $\bigplus$-\srect{}-$\msquare$, in time $\Oh^*(5^{k^2}\cdot(4^{32k+1} \cdot k)^{2k \cdot 5^{k^2}})$, we can construct a solution to the instance or decide that no solution exists, and hence $\bigplus$-\srect{}-$\msquare$ is \FPT.
\end{theorem}

\fi

\iflong
\section{An \FPT{} Algorithm When the Directions are Given}
\label{sec:general}
In this section, we give an \FPT{} algorithm for the case of axis-aligned rectangles that serially translate along a given (i.e., part of the input) fixed-cardinality set of directions. We first start by discussing the case where the robots move in the free plane, and then explain how the algorithm extends to the case where the robots are confined to a bounding box.

Let $\I=(R, \VVV, k, \ell)$ be an instance of \srect{}, where $\R=\{R_1, \ldots, R_k\}$ is a set of axis-aligned rectangular robots, and $\VVV=\{\overrightarrow{\theta_1}, \ldots, \overrightarrow{\theta_c}\}$, where $c \in \nat$ is a constant, is a set of unit vectors; we assumed herein that $c$ is a constant, but in fact, the results hold for any set of directions whose cardinality is a function of $k$.
Let $(s_{i}^{1}, s_{i}^{2})$ be the coordinates of the initial position of the center of $R_i$ and $(t_{i}^{1}, t_{i}^{2})$ be those of its final destination. 
We present a nondeterministic algorithm for solving the problem that makes a function of $k$ many guesses. The purpose of doing so is two fold. First, it will serve the purpose of proving the membership of \srect{} in \NP{} since the nondeterministic algorithm runs in polynomial time (assuming that $|\VVV|$ is a constant or polynomial in $k$). Second, it will render the presentation of the algorithm much simpler. We will then show in Theorem~\ref{thm:lpfpt} how to make the algorithm deterministic by enumerating all possibilities for its nondeterministic guesses, and analyze its running time. 

At a high level, the algorithm consists of three main steps: (1) guess the order in which the $k$ robots move in a schedule of length $\ell$ (if it exists); (2) guess the direction (i.e., the vector in $\VVV$) of each move; and (3) use Linear Programming (LP) to check the existence of corresponding amplitudes for the unit vectors associated with the $\ell$ moves that avoid collision.

We start by guessing the exact length, w.l.o.g.~call it $\ell$ (since it is a number between $0$ and $\ell$), of the  schedule sought. We then guess a sequence of $\ell$ events ${\cal E}=\langle e_1, \ldots, e_{\ell}\rangle$, where each event is a pair $(R_i, \overrightarrow{v_j})$, $i \in [k], j \in [c]$, that corresponds to a move/translation of a robot $R_i \in \R$ along a vector $\overrightarrow{v_j} \in \VVV$ in the sought schedule. 

The remaining part of the algorithm is to check if there is a schedule of length $\ell$ that is ``faithful'' to the guessed sequence ${\cal E}$
of events. 
That is, a schedule in which the robots' moves, and the translation in each move, correspond to those in ${\cal E}$.

To do so, we will resort to LP. Basically, we will rely on LP to give us the exact translation vector (i.e., the amplitude) in each event $e_i$, $i \in [\ell]$, while ensuring no collision, in case a schedule of length $\ell$ exists. 

For each event $e_i=(R, \overrightarrow{v})$, $i \in [\ell]$, we introduce LP variables $x_{i}, y_{i}$ to encode the coordinates $(x_{i}, y_{i})$ of the center of $R$ at the beginning of event $e_i$. We also introduce an LP variable $\alpha_i > 0$ that encodes the amplitude of the translation of $R$ in the direction $\overrightarrow{v}$ for event $e_i$. 

We form a set of LP instances such that the feasibility of one of them would produce the desired schedule, and hence, would imply a solution to instance $\I$. We explain next how this set of LP instances is formed. The LP constraints will stipulate the following conditions:

\begin{itemize}

\item[(i)] Each robot ends at its final destination.

\item[(ii)] Each robot starts at its initial position, and the starting position of robot $R_i$ in $e_q=(R_i, \overrightarrow{v}$) is the same as its final position after $e_p=(R_i, \overrightarrow{v'})$, where $e_p$ is the previous event to $e_q$ in ${\cal E}$ involving $R_i$ (i.e., $p$ is the largest index smaller than $q$). 


\item[(iii)] The translation in each $e_i$ is collision free.

\end{itemize}

It is easy to see that the condition in (i) can be enforced by adding, for each robot $R_i \in \R$, two linear equality constraints, one stipulating that the total displacement of the $x$-coordinate of (the center of) $R_i$ with respect to all its translation moves, plus its initial $x$-coordinate, is equal to the $x$-coordinate of its final destination, and the other constraint stipulating the same with respect to the $y$-coordinate of the center of $R_i$. More specifically, suppose that the subsequence of events in ${\cal E}$ corresponding to the sequence of translations of $R_i$ is $\langle e_{p_1}, \ldots, e_{p_r}\rangle$; let $\overrightarrow{v_1}, \ldots, \overrightarrow{v_r} \in \VVV$ be the corresponding vectors of these events, and let $\alpha_1, \ldots, \alpha_r$ be the introduced LP variables corresponding to the amplitudes of these vectors, respectively. We add the following linear constraints:

\[ s_{i}^{1} + \sum_{j=1}^{r} \alpha_j  \cdot x(\overrightarrow{v_j}) = t_{i}^{1} \ \mbox{and} \  s_{i}^{2} + \sum_{j=1}^{r} \alpha_j  \cdot y(\overrightarrow{v_j}) = t_{i}^{2}.
\]

It is easy to see that linear equality constraints can be added to enforce the 
conditions in (ii). More specifically, we add linear constraints stipulating that, for each robot $R_i$, the coordinates of the center of $R_i$ in the first event in ${\cal E}$ that contains $R_i$ are $(s_{i}^{1}, s_{i}^{2})$. Moreover, for every two consecutive events $e_p, e_q \in {\cal E}$, $p < q$, involving a robot $R_i$, we add linear constraints stipulating that the coordinates of the position of $R_i$ after $e_p$ match those of its starting position before $e_q$ occurs.

 \begin{figure}[htbp]
 \centering
 
\begin{tikzpicture}

    \node at (1, 3)   (a){};
    \node[label=$a$] at (1, 2.9) {};
    \node at (1, 1)   (b){};
    \node[label=$b$] at (1, 0.4) {};
    \node at (4, 1)   (c) {};
    \node[label=$c$] at (4, 0.4) {};
    \node at (4, 3)   (d) {};
    \node[label=$d$] at (4, 2.9) {};

\node at (6, 6)   (a') {};
\node[label=$a'$] at (6, 5.9) {};

    \node at (6, 4)  (b'){};
    \node[label=$b'$] at (6,3.4) {};
    \node at (9, 4)  (c') {};
    \node[label=$c'$] at (9.2, 3.5){};
    \node at (9, 6)   (d') {};
    \node[label=$d'$] at (9.1, 5.85){};

       \node at (2.5, 2)   (o){$o$};
    \node at (7.5, 5)   (o'){$o'$};

                    \draw[line width =1mm] (1,3) -- (1,1);
                    \draw[line width =1mm] (1,1) -- (4,1);
  \draw[dashed, line width =0.5mm] (4,1) -- (4,3);
   \draw[dashed, line width =0.5mm] (1,3) -- (4,3);
   \draw[line width =1mm] (6,6) -- (9,6);
   \draw[line width =1mm] (9,4) -- (9,6);
    \draw[line width =1mm] (4,1) -- (9,4);
    \draw[line width =1mm] (1,3) -- (6,6);
   \draw[dashed, line width =0.5mm] (6,4) -- (9,4);
   \draw[dashed, line width =0.5mm] (b') -- (a');

     \draw[->, ultra thick, blue,  arrows={-latex}]  (o) -- (o');
 
\end{tikzpicture}
 
\caption{Illustration of the trace of a rectangle $abcd$ with respect to a vector $\overrightarrow{v}=\overrightarrow{oo'}$. Rectangle $a'b'c'd'=$  \translate($abcd, \overrightarrow{v})$ and the polygon $abcc'd'a'$, shown with solid lines, is \trace($abcd, \overrightarrow{v}$). Observe that the edges of a trace are either edges of the rectangles, or are parallel to $\overrightarrow{v}$.}
\label{fig:trace}
\end{figure}
We next discuss how the condition in $(iii)$ can be enforced. For a robot $R$ and a vector $\overrightarrow{v}$, denote by \emph{trace}($R, \overrightarrow{v}$) the boundary of the polygonal region of the plane covered during the translation of $R$ by the vector $\overrightarrow{v}$; see Figure~\ref{fig:trace} for illustration. More formally, if we let $R_x=$  \translate($R, x \cdot \overrightarrow{v})$, where $0 \leq x \leq 1$, then \trace($R, \overrightarrow{v}$) is the boundary of 
$\bigcup_{x=0}^{1} R_x$; that is, \trace($R, \overrightarrow{v}$)$= \partial (\bigcup_{x=0}^{1} R_x$). It is clear that  \trace($R, \overrightarrow{v}$) is a polygon whose edges are either edges of $R$, or edges of  \translate($R, \overrightarrow{v})$, or line segments formed by a vertex of $R$ and a vertex of  \translate($R, \overrightarrow{v})$ whose slope is equal to that of $\overrightarrow{v}$. Therefore, if $R$ and $\overrightarrow{v}$ are fixed, then the slope of each edge of \trace($R, \overrightarrow{v}$) is fixed as well (i.e., independent of the LP variables). 

Now observe that robot $R_j \in \R$ does not collide with $R \in \R$ during the translation of $R$ by a vector $\alpha \cdot \overrightarrow{v}$, where $\overrightarrow{v} \in \VVV$, if and only if  no edge of $R_j$ and an edge of \trace($R, \alpha \cdot \overrightarrow{v}$) intersect in their interior. To stipulate that event $e_i=(R, \overrightarrow{v})$ is collision free, for each pair of edges $(pq, rs)$, where $pq$ is an edge of \trace($R, \alpha \cdot \overrightarrow{v}$) and $rs$ is an edge of $R_j$, we would like to add a linear constraint stipulating that the interiors of $rs$ and $pq$ do not intersect. If the slopes of the straight lines determined by $pq$ and $rs$ are the same, which we could check since the two slopes are given/fixed, then no such constraint is needed for this pair. Suppose now that the two straight lines $(rs)$ and $(pq)$ intersect at a point $\eta = (x_0, y_0)$; we  add linear constraints to stipulate that point $\eta$ does not lie in the interior of both segments $rs$ and $pq$, and hence the two segments do not intersect. To do so, we guess (i.e., branch into) one of the following four cases. Let $r=(x_r, y_r), s=(x_s, y_s), p=(x_p, y_p), q=(x_q, y_q)$ and assume, without loss of generality, that $x_p \leq x_q$ and that $x_r \leq x_s$.

\noindent {\bf Case (1):} Point $\eta$ is exterior to $pq$ and $x_0 \leq x_p$.

  \noindent {\bf Case (2):} Point $\eta$ is exterior to $pq$ and $x_0 \geq x_q$.

\noindent {\bf Case (3):} Point $\eta$ is exterior to $rs$ and $x_0 \leq x_r$.

\noindent {\bf Case (4):} Point $\eta$ is exterior to $rs$ and $x_0 \geq x_s$.

Note that $pq$ and $rs$ do not intersect in their interior if and only if (at least) one of the above cases holds. The algorithm guesses which case of the above four holds, and adds to the LP linear constraints stipulating the conditions of the guessed case. For instance, suppose that the algorithm guesses that Case (1) holds. Let $\beta, \gamma$ be the slopes of lines $(pq)$ and $(rs)$, respectively, and note that $\beta$ and $\gamma$ are known/fixed at this point . It is easy to verify that $x_0=(y_s-y_p +\beta x_p -\gamma x_s)/(\beta-\gamma)$. Therefore, to enforce the conditions in Case (1), we add to the LP the linear constraint:

\[ (y_s-y_p)+\beta x_p -\gamma x_s \leq (\beta-\gamma) x_p.\]

For each event $e_i=(R, \overrightarrow{v})$, and for each robot $R_j \in \R$, where $R_j \neq R$, and for each pair of edges $(pq, rs)$, where $pq$ is an edge of \trace($R, \alpha \cdot \overrightarrow{v}$) and $rs$ is an edge of $R_j$, the algorithm guesses which case of the above four cases applies and adds the corresponding linear constraint. The algorithm then solves the resulting LP. If the LP has a solution, then the instance $\I$ has a solution. If the LP is not feasible, then the algorithm rejects the instance.

We have the following theorem:

\begin{theorem}
\label{thm:lpfpt}
Given an instance $(\R, \VVV, k, \ell)$ of \srect{}, in time $\Oh^*((4^{32k} \cdot k \cdot |\VVV|)^{4k})$, we can construct a solution to the instance or decide that no solution exists, and hence \srect{} is \FPT.
\end{theorem}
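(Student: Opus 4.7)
The plan is to derandomize the nondeterministic algorithm just described by exhaustively enumerating every guess it makes, and to show that the total number of enumerated LP instances fits within the claimed bound; since each LP has $O(k)$ variables and $O(k^{2})$ linear constraints, each instance can be solved in $\mathrm{poly}(n)$ time.

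By Proposition~\ref{prop:free}, any feasible instance admits a schedule of length at most $4k$, so I would loop over $\ell'\in\{0,1,\ldots,4k\}$ and enumerate every sequence $\mathcal{E}=\langle e_{1},\ldots,e_{\ell'}\rangle$ of events $(R_{i},\overrightarrow{v_{j}})\in\R\times\VVV$; this contributes at most $(k\cdot|\VVV|)^{4k}$ candidates. For each fixed $\mathcal{E}$, conditions (i) and (ii) produce a deterministic set of linear equality constraints on the amplitude and position variables, so the only remaining branching arises from the collision condition (iii). Within a single event $e_{i}=(R,\overrightarrow{v})$, the polygon $\mathrm{trace}(R,\alpha\overrightarrow{v})$ has at most six edges (four when $\overrightarrow{v}$ is axis-aligned, six otherwise), each of the at most $k-1$ other robots contributes four edges, and for each of the resulting $\le 24(k-1)$ edge pairs one of the four disjunctive cases (1)--(4) must hold. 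This gives at most $4^{24k}\le 4^{32k}$ guesses per event, and hence at most $(4^{32k})^{4k}$ across the whole schedule.

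For correctness, feasibility of any enumerated LP yields amplitudes $\alpha_{i}>0$ that realize a valid schedule faithful to $\mathcal{E}$ by construction. Conversely, a genuine schedule of length $\ell'\le 4k$ determines a specific event sequence $\mathcal{E}^{*}$ together with, for every relevant edge pair, one of the four geometric cases; the LP corresponding to that particular enumerated choice is satisfied by the true amplitudes, so at least one enumerated LP is feasible. The one point I would record explicitly is that cases (1)--(4) are exhaustive for any pair of open segments whose supporting lines meet at a unique point $\eta$, i.e.\ that the two open segments are disjoint if and only if $\eta$ lies outside at least one of them; together with the observation that the slope of every trace edge is determined by $\mathcal{E}$ alone (so the coefficients of every linear constraint are known at enumeration time, and in particular the LP is genuinely linear rather than quadratic), this is the main geometric subtlety of the argument.

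Multiplying the three contributions yields a total running time of
\[
(k\cdot|\VVV|)^{4k}\cdot (4^{32k})^{4k}\cdot \mathrm{poly}(n)=\Oh^{*}\!\left((4^{32k}\cdot k\cdot|\VVV|)^{4k}\right),
\]
and any feasible LP can be converted back into an explicit schedule by reading off the guessed event sequence and the LP-supplied amplitudes, establishing the claimed \FPT{} bound. I expect the main obstacle to be not the counting but the careful verification that the LP encoding of collision avoidance is both sound (condition (iii) is implied by the constraints) and complete (every actual collision-free schedule is captured by some enumerated branch); once the four-case geometric lemma above is in hand, the rest is bookkeeping.
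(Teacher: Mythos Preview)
Your proposal is correct and follows essentially the same derandomization argument as the paper: enumerate the schedule length, the event sequence, and the four-way case split for each edge pair, then solve the resulting LP. The only cosmetic difference is that you bound the number of edges of $\mathrm{trace}(R,\alpha\overrightarrow{v})$ by six (which is in fact tight for a rectangle under a non-axis-aligned translation), whereas the paper uses the looser bound of eight; both fit comfortably within the stated $4^{32k}$ per-event branching factor, so the final running time is identical.
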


\begin{proof}
It is straightforward to verify that if there is a solution to the problem instance then there exists a guess corresponding to a feasible LP, and hence the algorithm is a nondeterministic algorithm that solves the problem correctly. We next analyze the time needed to simulate the algorithm deterministically.

First, noting that $\ell \leq 4k$ by Proposition~\ref{prop:free}, the algorithm branches into $\Oh(k)$ branches, each corresponding to a possible value 
$0 \leq \ell \leq 4k$, in order to find the correct number of moves/events in a schedule (if it exists); this branch adds only an $\Oh(k)$ multiplicative factor to the running time.

The number of sequences of $\ell$ events that we need to enumerate is at most 
$(k \cdot |\VVV|)^{\ell} \leq (k \cdot |\VVV|)^{4k}$. The algorithm will branch into each of these sequences. Fix such a sequence ${\cal E}$. For each event $(R, \overrightarrow{v})$ in this sequence, observe that the number of edges in \trace($R, \alpha \cdot \overrightarrow{v}$) is at most $8$. Each $R_j \neq R$ has four edges, and hence at most 32 pairs of line segments are considered for stipulating non-collision between $R$ and $R_j$, for a total of at most $32k$ pairs of segments over all robots in $\R$. For each pair of segments, the algorithm guesses which of the four possible cases is correct. Therefore, the algorithm branches into a total of $4^{32k}$ branches for each event, for a total of at most $(4^{32k})^{\ell} \leq (4^{32k})^{4k}$ branches over the sequence of events ${\cal E}$.  Once the LP is formed, it can be solved in polynomial time~\cite{lppolynomial1, lppolynomial}. It follows that the algorithm can be simulated by creating a search tree of size $\Oh^*((4^{32k} \cdot k \cdot |\VVV|)^{4k})$. Since at every node of the search tree all the operations that the algorithm performs can be implemented in polynomial time, it follows that the  nondeterministic algorithm can be simulated by a deterministic algorithm in time $\Oh^*((4^{32k} \cdot k \cdot |\VVV|)^{4k})$. 
\end{proof}

\begin{corollary}\label{cor:NP}
     \srect is in \NP.
 \end{corollary}

\begin{proof}
It is easy to verify that the number of guesses made by the nondeterministic algorithm is polynomial.
\end{proof}

Next we discuss $\srect{}$-$\msquare$, in which the robots are confined to a bounding box. In this case, the problem becomes \PSPACE-hard as we note in Section~\ref{sec:nphardfixeddirections}. It is easy to see that the LP part of the above approach can be easily modified to work for any rectangular bounding box by adding linear constraints stipulating that all rectangles resulting from the translations are confined to the box. (Basically, we only need to add constraints stipulating that the $x$/$y$-coordinate of each point are within the  vertical/horizontal lines of the bounding box.) The only issue is upper bounding the number of moves, $\ell$, in a feasible schedule by a function of $k$.

For the case of axis-aligned motion, that is, $\bigplus$-\srect{}-$\msquare$, Proposition~\ref{prop:boundedconfigurations} provides us with an upper bound of $2k \cdot 5^{k(k-1)}\leq 2k \cdot 5^{k^2}$ on $\ell$ in case the instance is feasible. Note that if the instance is not feasible, then the algorithm will end up rejecting the instance. Hence, by a similar run-time analysis to that in Theorem~\ref{thm:lpfpt}, we have:

\begin{theorem}
\label{thm:mainboxaxisaligned}
Given an instance $(\R, \VVV, k, \ell, \Gamma)$ of $\bigplus$-\srect{}-$\msquare$, in time $\Oh^*(5^{k^2}\cdot(4^{32k+1} \cdot k)^{2k \cdot 5^{k^2}})$, we can construct a solution to the instance or decide that no solution exists, and hence $\bigplus$-\srect{}-$\msquare$ is \FPT.
\end{theorem}
\fi

\ifshort
\section{An \FPT{} Algorithm for $\bigplus$-\ppcmp}
\label{sec:parallel}
In this section, using a similar approach to that in the previous section, we obtain an \FPT{} algorithm for $\bigplus$-\ppcmp{}.  The only major challenge now is to stipulate non-collision in the case of parallel motion. We again first discuss $\bigplus$-\ppcmp{} and then extend the \FPT{} result to $\bigplus$-\ppcmp{}-$\msquare$.

 Let $(\R=\{R_1, \ldots, R_k\}, \VVV, k, \ell)$ be an instance of $\bigplus$-\prect{}, where $\VVV$ is the set of unit vectors of the negative and positive $x$-axis and $y$-axis.  
We again present a nondeterministic algorithm for the problem, which will imply its membership in \NP. The algorithm proceeds in a similar fashion to that of \srect{} by guessing the exact number $\ell$ of moves in the sought schedule and then guessing the sequence 
 ${\cal E}=\langle e_1, \ldots, e_{\ell}\rangle$ of $\ell$ events corresponding to the schedule, with the exception that now each event --instead of containing the single robot that moves in that event and the direction of its translation -- contains a subset of robots and their corresponding directions of translations in $\VVV$; that is, each event $e_i$ is now a pair of the form $(S_i, V_i)$, $i \in [\ell]$, where $S_i \subseteq \R$ and $V_i \subseteq \VVV$ (and the implicit association between each robot in $S_i$ and its direction in $V_i$).

We again introduce, for each event $e_i=(S_i, V_i)$, and for each robot $R_j \in S_i$, variables $x_{j}, y_{j}$ to encode the coordinates $(x_{j}, y_{j})$ of the center of $R_j$ at the beginning of event $e_i$. We also introduce a variable $\alpha_j > 0$ that encodes the amplitude of the translation in the direction $\overrightarrow{v_j} \in V_i$ corresponding to the translation of $R_j$.   

Consider a pair $R_p, R_q \in S_i$, and let $\overrightarrow{v_p}, \overrightarrow{v_q}$ be the directions of their translations in $V_i$, respectively, and let $\alpha_p, \alpha_q$ be the LP variables corresponding to the amplitudes of $\overrightarrow{v_p}, \overrightarrow{v_q}$, respectively. We only discuss here how to encode in the LP that the translations of $R_p$ and $R_q$ in event $e_i$ are collision-free and refer to the full version of the paper for the complete details. The algorithm makes several guesses to distinguish all possible cases. We focus here only on the two most involved cases arising when \trace($R_p, \alpha_p \cdot \overrightarrow{v_p}$) and \trace($R_q, \alpha_q \cdot \overrightarrow{v_q}$) intersect in their interior. We make guesses to distinguish the two (out of several) subcases below, and the conditions under which they apply; refer to Figure~\ref{fig:brachingillustration}.

\noindent {\bf First Subcase:} $R_p$ is to the right of $R_q$ when $R_q$ reaches the horizontal line determined by the bottom-horizontal segment of $R_p$. We add the constraint:

\[(y_p - y(R_p)/2) - (y_q + y(R_q)/2)  \geq x_q + x(R_q)/2 - (x_p-x(R_p)/2).\]

\noindent {\bf Second Subcase:} $R_q$ is above $R_p$ when $R_p$ reaches the vertical line determined by the  left-vertical segment of $R_q$. We add the linear constraint:

\[(x_q - x(R_q)/2) - (x_p + x(R_p)/2)  \geq y_p + y(R_p)/2 - (y_q-y(R_q)/2).\]

\begin{figure}[h]
 
  \resizebox{0.35\textwidth}{!}{
\begin{tikzpicture}
       
    \draw[line width=1mm] (0, 0) rectangle (4, 2);
     \draw[line width=1mm] (1, -8) rectangle (3, -6);

     \node[label=$R_p$] at (-0.5, 0.6) {};
      \node[label=$R_q$] at (0.5, -7.4) {};

 \draw[<->, line width=0.5mm, blue]    (3,-5.8) -- (3,-0.2);
 \node[text=blue] at (2.6, -3) {$\Delta_V$};

 \draw[<->, line width=0.5mm, red]    (0,-0.3) --(2.8,-0.3);
 \node[text=red] at (1.5, -0.7) {$\Delta_H$};

 \draw[->, line width=0.5mm]    (-0.5,3) --(6,3);

 \draw[->, line width=0.5mm, black]    (-1,-8) --(-1,5);

     \node at (-2, 1) {$\alpha_q \cdot \overrightarrow{v_q}$};
     \node at (2, 3.5) {$\alpha_p \cdot \overrightarrow{v_p}$};
\end{tikzpicture}
 }
\resizebox{0.65\textwidth}{!}{
      \hspace*{0.1\linewidth}
  \begin{tikzpicture}
     
    \draw[line width=1mm] (0, 0) rectangle (4, 2);
     \draw[line width=1mm] (11, -3) rectangle (13, -1);

     \node[label=$R_p$] at (-0.5, 0.6) {};
      \node[label=$R_q$] at (10.5, -2) {};

 \draw[<->, line width=0.5mm, blue]    (13.5,-3) -- (13.5,2);
 \node[text=blue] at (13, 0) {$\Delta_V$};

 \draw[<->, line width=0.5mm, red]    (4,0) --(11,0);
 \node[text=red] at (7, 1) {$\Delta_H$};

 \draw[->, line width=0.5mm]    (-0.5,3) --(11,3);

 \draw[->, line width=0.5mm, black]    (14,-6) --(14,7);

     \node at (13.2, 3) {$\alpha_q \cdot \overrightarrow{v_q}$};
     \node at (6, 4) {$\alpha_p \cdot \overrightarrow{v_p}$};
 \end{tikzpicture}
 }
  
\caption{Illustration of the two subcases. In the left figure (first subcase), the distance $\Delta_V$ between the top edge of $R_q$ and the bottom edge of $R_p$ is larger than the distance $\Delta_H$ between the left edge of $R_p$ and the right edge of $R_q$. Hence, $R_p$ manages to ``escape'' $R_q$ in time. Similarly, in the right figure (second subcase), $R_q$ manages to escape $R_p$ in time.}
\label{fig:brachingillustration}
\end{figure}

\begin{theorem}[Appendix]
\label{thm:lpparallelfpt}
Given an instance $(\R, \VVV, k, \ell)$ of $\bigplus$-\prect{}, in time $\Oh^*(5^{2k^3}\cdot 8^{4k^2})$, we can compute a solution to the instance or decide that no solution exists, and hence $\bigplus$-\prect{} is \FPT.
\end{theorem}
 
\begin{corollary}
     $\bigplus$-\prect{} is in \NP.
 \end{corollary}
We extend the above result to $\bigplus$-\prect{}-$\msquare$, which is \PSPACE-hard (see Section~\ref{sec:nphardfixeddirections}): 
\begin{theorem}
\label{thm:lpparallelfptbox}
Given an instance $(\R, \VVV, k, \ell)$ of $\bigplus$-\prect{}-$\msquare$, in time $\Oh^*(5^{k^2}\cdot 8^{2k^2\cdot5^{k^2}} \cdot 5^{k^3\cdot 5^{k^2}})$, we can compute a solution to the instance or determine that no solution exists, and hence $\bigplus$-\prect{}-$\msquare$ is \FPT.
\end{theorem}
\fi

\iflong
\section{An \FPT{} Algorithm for $\bigplus$-\ppcmp}
\label{sec:parallel}
In this section, using a similar approach to that in the previous section, we obtain an \FPT{} algorithm for $\bigplus$-\ppcmp{}.  The only major difference and challenge is how to stipulate non-collision in the case of parallel motion. We again first discuss $\bigplus$-\ppcmp{} and then extend the \FPT{} result to $\bigplus$-\ppcmp{}-$\msquare$.

Let $\VVV=\{\overrightarrow{H}^{-}, \overrightarrow{H}^{+}, \overrightarrow{V}^{-}, \overrightarrow{V}^{+}\}$ denote the set of unit vectors of the negative $x$-axis, positive $x$-axis, negative $y$-axis, and positive $y$-axis, respectively, and let $(\R=\{R_1, \ldots, R_k\}, \VVV, k, \ell)$ be an instance of $\bigplus$-\prect{}.  

We again present a nondeterministic algorithm for solving the problem. The main purpose behind that is to conclude the membership of the problem in \NP. The algorithm proceeds in a similar fashion to that of \srect{} by guessing the exact number $\ell$ of moves in the sought schedule and then guessing the sequence 
 ${\cal E}=\langle e_1, \ldots, e_{\ell}\rangle$ of $\ell$ events corresponding to the schedule, with the exception that now each event---instead of containing the single robot that moves in that event and the direction of its translation---contains a subset of robots and their corresponding directions of translations in $\VVV$; that is, each event $e_i$ is now a pair of the form $(S_i, V_i)$, $i \in [\ell]$, where $S_i \subseteq \R$ and $V_i \subseteq \VVV$ (and the implicit association between each robot in $S_i$ and its direction in $V_i$).

We again introduce, for each event $e_i=(S_i, V_i)$, and for each robot $R_j \in S_i$, variables $x_{j}, y_{j}$ to encode the coordinates $(x_{j}, y_{j})$ of the center of $R_j$ at the beginning of event $e_i$. We also introduce a variable $\alpha_j > 0$ that encodes the amplitude of the translation in the direction $\overrightarrow{v_j} \in V_i$ corresponding to the translation of $R_j$. 

The linear constraints in the LPs stipulating that each robot ends at its final destination and that the translation of a robot $R_j \in S_i$ does not collide with any (stationary) robot $R \notin S_i$ remain the same. The only additional constraints that we need to impose are those stipulating that no collision takes place between any two robots in $S_i$, that is, between two robots that are moving simultaneously during event $e_i$. 

Consider a pair $R_p, R_q \in S_i$, and let $\overrightarrow{v_p}, \overrightarrow{v_q}$ be the directions of their translations in $V_i$, respectively, and let $\alpha_p, \alpha_q$ be the LP variables corresponding to the amplitudes of $\overrightarrow{v_p}, \overrightarrow{v_q}$, respectively. To encode in the LP that the translations of $R_p$ and $R_q$ in event $e_i$ are collision-free, we distinguish the following cases:

\noindent{\bf Case (A):} The translations of both $R_p$ and $R_q$ are horizontal. Here, we further distinguish the two subcases: (A)-1 $\overrightarrow{v_{p}}=\overrightarrow{v_{q}}$ and (A)-2 $\overrightarrow{v_{p}}=-\overrightarrow{v_{q}}$.

\noindent {\bf Subcase (A)-1:} Without loss of generality, we assume that $\overrightarrow{v_p}=\overrightarrow{v_{q}}=\overrightarrow{H}^{+}$; the other case where $\overrightarrow{v_p}=\overrightarrow{v_{q}}=\overrightarrow{H}^{-}$ is analogous. The algorithm guesses one of the following four cases (i.e., we branch into the following four cases). In the first case, $R_p$ is above $R_q$, that is, the bottom-horizontal segment of $R_p$ is not below the top-horizontal segment of $R_q$. In this case $R_p$ and $R_q$ do not collide during $e_i$, and we add to the LP the linear constraint stipulating the assumption of this case, which is:

\[y_p - y(R_p)/2 \geq y_q + y(R_q)/2.\]

The second case is that $R_p$ is below $R_q$, that is, the top-horizontal segment of $R_p$ is not above the bottom-horizontal segment of $R_q$, and again, this case leads to no collision. We add the constraint to the LP stipulating the assumption of this case, which is:  

\[y_p + y(R_p)/2 \leq y_q - y(R_q)/2.\]

The third case is that $R_p$ is to the left of $R_q$, that is, the right-vertical segment of $R_p$ is to the left of the left-vertical segment of $R_q$. This case may lead to collision. We add the constraint to the LP stipulating the assumption of this case, which is:  

\[x_p + x(R_p)/2 \leq x_q - x(R_q)/2.\]

We also add the constraint stipulating that no collision occurs, which is:

\[x_p + x(R_p)/2 + \alpha_p \leq x_q - x(R_q)/2 + \alpha_q.\]

The fourth case is that $R_p$ is to the right of $R_q$, that is, the right-vertical segment of $R_q$ is not to the right of the left-vertical segment of $R_p$. This case may lead to collision. We add the constraint to the LP stipulating the assumption of this case, which is:  

\[x_q + x(R_q)/2 \leq x_p - x(R_p)/2.\]

We also add the constraint stipulating that no collision occurs, which is:

\[x_q + x(R_q)/2 + \alpha_q \leq x_p - x(R_p)/2 + \alpha_p.\]

\noindent {\bf Subcase (A)-2:} Assume that $\overrightarrow{v_p}=\overrightarrow{H}^{+}$ and $\overrightarrow{v_{q}}=\overrightarrow{H}^{-}$; the treatment of the other case is analogous. As in subcase (A)-1, we guess one of four cases. The first two cases, in which we guess whether $R_p$ is above/below $R_q$, are treated exactly the same as in Case (A)-1. Hence, we only consider the third and fourth cases. 

The case where $R_p$ is to the right of $R_q$ leads to no collision, and we only add the constraint stipulating the assumption in this case, which is:

\[x_p - x(R_p)/2 \geq x_q + x(R_q)/2.\]

In the fourth case, $R_p$ is to the left of $R_q$ and this case may lead to collision. We add the constraint stipulating the assumption of this case, which is:

\[x_p + x(R_p)/2 \leq x_q - x(R_q)/2.\]

We also add the constraint stipulating that no collision occurs, which is:

\[x_p + x(R_p)/2 + \alpha_p \leq x_q - x(R_q)/2 - \alpha_q.\]

\noindent {\bf Case (B):} The translations of both $R_p$ and $R_q$ are vertical. The treatment of this case is very similar to that of Case (A), and hence is omitted. 

\noindent {\bf Case (C):} The translation of $R_p$ is horizontal and the translation of $R_q$ is vertical or vice versa. We only consider the case where $R_p$ translates horizontally and $R_q$ vertically; the other case is symmetrical. Assume that $\overrightarrow{v_p}=\overrightarrow{H}^{+}$ and $\overrightarrow{v_{q}}=\overrightarrow{V}^{+}$; the treatment of the other cases are analogous.

We guess one of the two subcases: (C)-1 \trace($R_p, \alpha_p \cdot \overrightarrow{v_p}$) and \trace($R_q, \alpha_q \cdot \overrightarrow{v_q}$) do not intersect in their interior and (C)-2 \trace($R_p, \alpha_p \cdot \overrightarrow{v_p}$) and \trace($R_q, \alpha_q \cdot \overrightarrow{v_q}$) intersect in their interior.

Subcase (C)-1 could happen in four possible scenarios; we guess the scenario and, in each case, add the linear constraints stipulating the assumption of that scenario.

\noindent {\bf Scenario (C)-1-1:} $R_q$ is above $R_p$. This is the same as the cases treated before: We add the constraint

\[y_q - y(R_q)/2 \geq y_p + y(R_p)/2.\]

\noindent {\bf Scenario (C)-1-2:} $R_q$ is to the left of $R_p$. This is also similar to one of the cases treated before: We add the constraint

\[x_q + x(R_q)/2 \leq x_p - x(R_p)/2.\]

\noindent {\bf Scenario (C)-1-3:} The translation of $R_q$ by vector $\alpha_q \cdot \overrightarrow{V}^{+}$ is below $R_p$. We add the constraint

\[y_q + y(R_q)/2 + \alpha_q \leq y_p - y(R_p)/2.\]

\noindent {\bf Scenario (C)-1-4:} The translation of $R_p$ by vector $\alpha_p \cdot \overrightarrow{H}^{+}$ is to the left of $R_q$. We add the constraint

\[x_p + x(R_p)/2 + \alpha_p \leq x_q - x(R_q)/2.\]

In Case (C)-2, we know that \trace($R_p, \alpha_p \cdot \overrightarrow{v_p}$) and \trace($R_q, \alpha_q \cdot \overrightarrow{v_q}$) intersect in their interior. We guess which of the two subcases below applies; see Figure~\ref{fig:brachingillustration}.

\noindent {\bf Subcase (C)-(2)-1:} $R_p$ is to the right of $R_q$ when $R_q$ reaches the horizontal line determined by the bottom-horizontal segment of $R_p$. We add the constraint:

\[(y_p - y(R_p)/2) - (y_q + y(R_q)/2)  \geq x_q + x(R_q)/2 - (x_p-x(R_p)/2).\]
 
\noindent {\bf Subcase (C)-(2)-2:} $R_q$ is above $R_p$ when $R_p$ reaches the vertical line determined by the    left-vertical segment of $R_q$. We add the linear constraint:

\[(x_q - x(R_q)/2) - (x_p + x(R_p)/2)  \geq y_p + y(R_p)/2 - (y_q-y(R_q)/2).\]

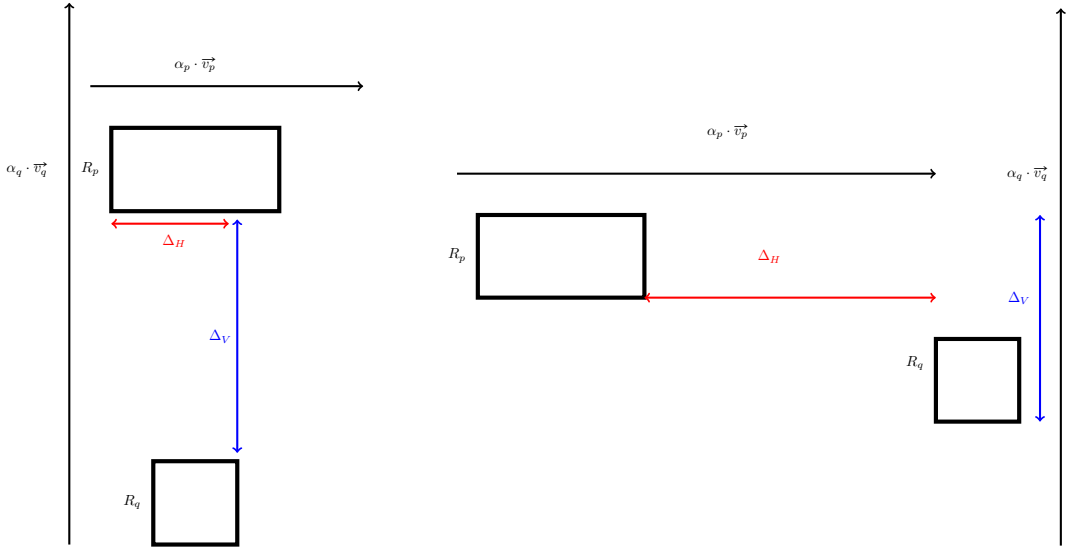
\begin{figure}[h]
 
  \resizebox{0.35\textwidth}{!}{
\begin{tikzpicture}
       
    \draw[line width=1mm] (0, 0) rectangle (4, 2);
     \draw[line width=1mm] (1, -8) rectangle (3, -6);

     \node[label=$R_p$] at (-0.5, 0.6) {};
      \node[label=$R_q$] at (0.5, -7.4) {};

 \draw[<->, line width=0.5mm, blue]    (3,-5.8) -- (3,-0.2);
 \node[text=blue] at (2.6, -3) {$\Delta_V$};

 \draw[<->, line width=0.5mm, red]    (0,-0.3) --(2.8,-0.3);
 \node[text=red] at (1.5, -0.7) {$\Delta_H$};

 \draw[->, line width=0.5mm]    (-0.5,3) --(6,3);

 \draw[->, line width=0.5mm, black]    (-1,-8) --(-1,5);

     \node at (-2, 1) {$\alpha_q \cdot \overrightarrow{v_q}$};
     \node at (2, 3.5) {$\alpha_p \cdot \overrightarrow{v_p}$};
\end{tikzpicture}
 }
\resizebox{0.65\textwidth}{!}{
      \hspace*{0.1\linewidth}
  \begin{tikzpicture}
     
    \draw[line width=1mm] (0, 0) rectangle (4, 2);
     \draw[line width=1mm] (11, -3) rectangle (13, -1);

     \node[label=$R_p$] at (-0.5, 0.6) {};
      \node[label=$R_q$] at (10.5, -2) {};

 \draw[<->, line width=0.5mm, blue]    (13.5,-3) -- (13.5,2);
 \node[text=blue] at (13, 0) {$\Delta_V$};

 \draw[<->, line width=0.5mm, red]    (4,0) --(11,0);
 \node[text=red] at (7, 1) {$\Delta_H$};

 \draw[->, line width=0.5mm]    (-0.5,3) --(11,3);

 \draw[->, line width=0.5mm, black]    (14,-6) --(14,7);

     \node at (13.2, 3) {$\alpha_q \cdot \overrightarrow{v_q}$};
     \node at (6, 4) {$\alpha_p \cdot \overrightarrow{v_p}$};
 \end{tikzpicture}
 }
  
\caption{Illustration of Subcases (C)-(2)-1 (left figure) and (C)-(2)-2 (right figure). In Subcase (C)-(2)-1, the distance $\Delta_V$ between the top edge of $R_q$ and the bottom edge of $R_p$ is larger than the distance $\Delta_H$ between the left edge of $R_p$ and the right edge of $R_q$. Hence, $R_p$ manages to ``escape'' $R_q$ in time. Similarly, in Subcase (C)-(2)-2, $R_q$ manages to escape $R_p$ in time. }
\label{fig:brachingillustration}
\end{figure}

We conclude with the following theorem:

\begin{theorem}
\label{thm:lpparallelfpt}
Given an instance $(\R, \VVV, k, \ell)$ of $\bigplus$-\prect{}, in time $\Oh^*(5^{2k^3}\cdot 8^{4k^2})$, we can compute a solution to the instance or decide that no solution exists, and hence $\bigplus$-\prect{} is \FPT.
\end{theorem}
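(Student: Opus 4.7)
The plan is to follow the template of Theorem~\ref{thm:lpfpt}: nondeterministically guess (i) the number of parallel moves, (ii) the full sequence of events, and (iii) for each relevant pair of robots the collision case; then reduce every complete guess to a single linear program and solve it in polynomial time. Two structural facts enable this. First, since $\VVV$ contains two non-parallel directions, Proposition~\ref{prop:free} gives $\ell\leq 4k$, so we may branch on $\ell'\in\{0,1,\dots,4k\}$ at an $\Oh(k)$ cost. Second, every event is fully described by specifying, for each of the $k$ robots, whether it moves and, if so, along which of the four vectors of $\VVV$; hence there are at most $5^k$ candidate events and at most $(5^k)^{4k}=5^{4k^2}$ candidate sequences $\cal E$, well within the claimed bound.

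Fix a sequence $\cal E=\langle e_1,\dots,e_{\ell'}\rangle$ with $e_i=(S_i,V_i)$. For every $i\in[\ell']$ and every $R_j\in S_i$ I would introduce the variables $x_j,y_j,\alpha_j>0$ as in Section~\ref{sec:general}, together with the same linear constraints enforcing that each robot reaches its destination and that its end-of-event position matches its next start-of-event position. Collisions between a moving robot and a stationary one (i.e.\ $R_j\in S_i$ against $R\notin S_i$) are encoded exactly as in Theorem~\ref{thm:lpfpt}, using the trace polygon and branching into the four relative-position options; because the motion is axis-aligned, trace edges have known slopes and the inequalities are linear in the LP variables. The new ingredient is the pairwise constraints for two simultaneously moving robots $R_p,R_q\in S_i$: these are handled by branching into Cases (A)--(C) above according to whether $\vec v_p,\vec v_q$ are both horizontal, both vertical, or mutually perpendicular, and within each case into the subcases detailed above; in every leaf the inequalities are again linear in $x_p,y_p,\alpha_p,x_q,y_q,\alpha_q$. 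The main obstacle, and the step that deserves the most care, is verifying exhaustiveness of the Case (C) split (subcases (C)-1-1 through (C)-2-2): one must argue that whenever the two traces meet in their interior, either the horizontally-moving robot escapes the vertical strip before the vertical-moving robot reaches its horizontal strip, or vice versa, so that the two subcases (C)-(2)-1 and (C)-(2)-2 really cover all collision-free parallel motions of a horizontally and a vertically moving rectangle.

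Once a branch is fixed, the resulting system is an LP whose feasibility can be decided in polynomial time by~\cite{lppolynomial1, lppolynomial}; a feasible solution provides a concrete schedule, and if every branch is infeasible then no schedule of length $\ell'$ exists and the algorithm safely rejects. For the running-time accounting, per event there are at most $\binom{k}{2}\leq k^2/2$ moving pairs and at most $5$ branches per pair (the maxima across Cases (A), (B), (C) after merging the four relative-position options of Case (C)-1 with the two options of Case (C)-2 into a sufficient enumeration), giving at most $5^{k^2/2}$ collision branches per event and at most $5^{2k^3}$ collision branches over the $\ell'\leq 4k$ events. Multiplying this by the $\Oh^*(8^{4k^2})$ cost of enumerating all sequences $\cal E$ (using a loose $8^k\ge 5^k$ bound per event for uniformity with the case-analysis constants) and by the polynomial LP cost at each leaf yields total deterministic time $\Oh^*(5^{2k^3}\cdot 8^{4k^2})$, and hence $\bigplus$-\prect{}$\in\FPT$ parameterized by $k$. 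As in the serial setting, the fact that all guesses are of polynomial size also implies that $\bigplus$-\prect{} lies in \NP.
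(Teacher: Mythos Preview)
Your proposal is correct and follows essentially the same route as the paper's proof: guess $\ell'\le 4k$ via Proposition~\ref{prop:free}, enumerate event sequences, branch on the pairwise collision cases (A)--(C), and solve an LP at each leaf. Your event count $5^k$ is in fact tighter than the paper's $2^k\cdot 4^k$, and your identification of the exhaustiveness of the Case~(C) split as the delicate step is apt.

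One small point of bookkeeping you gloss over: your running-time paragraph counts only ``moving pairs'' with the $5$-way branch, but an event with $S_i\subsetneq\R$ also produces moving--stationary pairs (which you do mention earlier, with a $4$-way branch each). The paper closes this by observing that the worst case is $S_i=\R$; equivalently, the total number of pairs (moving--moving plus moving--stationary) in any event is at most $\binom{k}{2}$, and the stationary branch factor $4$ is dominated by $5$, so the bound $5^{k^2/2}$ per event still holds. With that clarification your analysis matches the paper's.
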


\begin{proof}
It is is easy to verify the correctness of the algorithm. We only analyze the running time needed in order to simulate the nondeterministic algorithm.  

Again, branching into $\Oh(k)$ branches, each corresponding to a possible value 
$0 \leq \ell \leq 4k$ (see Proposition~\ref{prop:free}) to find the correct number of moves/events in the desired schedule adds only an $\Oh(k)$ multiplicative factor to the running time.

The number of sequences of $\ell$ events that we need to enumerate is at most 
$(2^k \cdot 4^k)^{\ell} \leq (8)^{4k^2}$. The algorithm will branch into each of these sequences. Fix such a sequence ${\cal E}$. For each event $(S_i, V_{i})$ in this sequence, there are ${k \choose 2} \leq k^2/2$ pairs of robots in $S_i$ to consider for collision. It is easy to see that the worst case corresponds to when $S_i = \R$. In this case, for each pair, the algorithm needs to branch into at most five cases (corresponding to branching into one of the two cases (C)-1, (C)-2, and in the (C)-1 case, further branching into four subcases).  Therefore, the algorithm branches into a total of $5^{k^2/2}$ branches for each event, for a total of at most $(5^{k^2/2})^{\ell} \leq (5^{2k^3})$ branches over the sequence of events ${\cal E}$. Once the LP is formed, it can be solved in polynomial time~\cite{lppolynomial1, lppolynomial}. It follows that the algorithm can be simulated by creating a search tree of size $\Oh^*(5^{2k^3} \cdot 8^{4k^2})$. Since at any node of the search tree all operations that the algorithm performs can be implemented in polynomial time, it follows that the  nondeterministic algorithm can be simulated by a deterministic algorithm in time $\Oh^*(5^{2k^3}\cdot 8^{4k^2})$. 
\end{proof}

\begin{corollary}
     $\bigplus$-\prect{} is in \NP.
 \end{corollary}
To extend the above result to $\bigplus$-\prect{}-$\msquare$, which is \PSPACE-hard (see Section~\ref{sec:nphardfixeddirections}), note that the LP part works seamlessly by adding linear constraints confining the moves to the bounding box.  We just need to modify the upper bound on the length $\ell$ of a schedule for a feasible instance by $2k \cdot 5^{k(k-1)}\leq 2k \cdot 5^{k^2}$ (see Proposition~\ref{prop:boundedconfigurations}). By modifying the run-time analysis in the proof of Theorem~\ref{thm:lpparallelfpt} accordingly, we obtain :

\begin{theorem}
\label{thm:lpparallelfptbox}
Given an instance $(\R, \VVV, k, \ell)$ of $\bigplus$-\prect{}-$\msquare$, in time $\Oh^*(5^{k^2}\cdot 8^{2k^2\cdot5^{k^2}} \cdot 5^{k^3\cdot 5^{k^2}})$, we can compute a solution to the instance or determine that no solution exists, and hence $\bigplus$-\prect{}-$\msquare$ is \FPT.
\end{theorem}
\fi

\section{Hardness Results}

\label{sec:nphardfixeddirections}
The $\bigplus$-\srect{}-$\msquare$ is \PSPACE-hard; this follows from the reduction of~\cite{hopcroft1984}, since the rectangles in hard instances of~\cite{hopcroft1984} move horizontally or vertically. Also, an instance, when feasible, is feasible by a serial motion. Therefore, $\bigplus$-\prect{}-$\msquare$ is also \PSPACE-hard. The restriction to axis-aligned motion actually makes the reduction in~\cite{hopcroft1984} simpler.




The following theorem shows that \srect{}, restricted to instances in which the set $\VVV$ of directions is a fixed set containing at least two nonparallel directions, is \NP-hard. From this and the results of Sections~\ref{sec:grid} and~\ref{sec:general}, it follows that the problem is \NP-complete.


\iflong
The hardness reduction follows the ideas of~\cite{calinescu2008,dumitrescu2013} for the case of unit-disk robots and unrestricted translation motion. Essentially, the same proof works for the problem where the rectangular robots can move in arbitrary directions. For moves along a fixed set of directions, however, we need to make some nontrivial modifications to the reduction.
\fi

\ifshort
\begin{theorem}
  [Appendix] 
\label{t:nphard}
\label{thm:nphard}
    \srect{} restricted to the set of instances in which $\VVV$ is fixed and contains two non-parallel directions, is \NP-complete.
\end{theorem}
\fi

\iflong
\begin{theorem}
   
\label{t:nphard}
\label{thm:nphard}
    \srect{} restricted to the set of instances in which $\VVV$ is fixed and contains two non-parallel directions, is \NP-complete.
\end{theorem}
\begin{proof}

It was shown in Corollary~\ref{cor:NP} that the problem is in \NP. Here we prove \NP-hardness.
Our starting point is the 3-SET-COVER problem. In the 3-SET-COVER problem, we have $m$ sets $S_1, \ldots, S_m \subset U$ where $|U|=n$. Given $k\geq 1$, the problem asks if there exist at most $k$ sets $S_i$  that cover $U$, that is, their union is $U$. We set $s_i:=|S_i|\leq 3$. As in~\cite{dumitrescu2013}, we consider the graph $G$ depicted in Figure~\ref{fig:drawing}. The nodes of $G$ are partitioned into three sets as in the figure. Those in $A$ are in a one-one correspondence with the elements of $U$.

The nodes in $B$ correspond to the sets $S_i$, and the nodes in part $C$ are again in one-one correspondence with elements of $U$.

We consider a fixed planar drawing of the graph $G$. In the original problem considered by \cite{dumitrescu2013}, a robot can travel along the edge $S_iu_j$ in one move. However, for us this is not true and here we need to adapt the technique. Since $\VVV$ contains two non-parallel directions, take any two such directions, say $\overrightarrow{e_1},\overrightarrow{e_2} \in \VVV$. The main idea is to draw the graph $G$ such that each edge is drawn as a (possibly non-uniform) zigzag using $\pm \overrightarrow{e_1}$ and $\pm \overrightarrow{e_2}$. To make this work, we need to make a careful adjustment of the number of zigs and zags on different parts of the edges $S_iu_j$. To show the \NP-hardness, it suffices to restrict ourselves to the case where the robots are unit squares.

\begin{figure}
    \centering
    \includegraphics{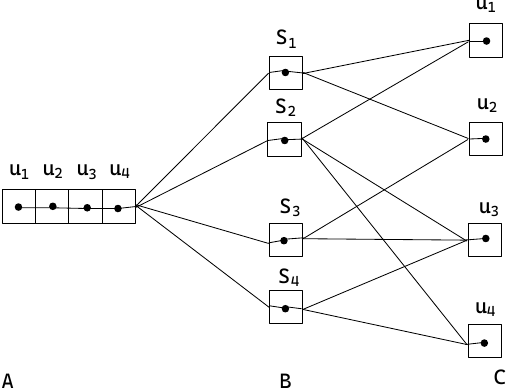}
    \caption{The graph drawing used in the reduction. Straight line segments will be replaced with zigzags. The squares depict the start and end positions of robots and are not part of the graph.}\label{fig:drawing}
    \label{fig:enter-label}
\end{figure}

In the drawing of Figure~\ref{fig:drawing}, intersections happen between arcs $S_iu_j$. By moving the nodes in $B$ by a small distance up or down we can make sure that these intersection points are on different $x$-coordinates. Imagine we separate these intersection points by vertical lines $l_1, \ldots, l_\chi$ placed between any two consecutive intersection points. These lines will cut all segments $S_iu_j$. There are at most $3m$ edges, and at most $9m^2$ intersections. The total number of segments that the edges $S_iu_j$ are cut into is $O(m^3)$. The vertical lines partition the strip between nodes $B$ and $C$ into $\chi+1$ smaller, interior-disjoint strips. We will consider a typical strip $\Sigma$. In the interior of $\Sigma$, there is a single intersection between arcs $S_iu_j$, see Figure~\ref{fig:zigzag}. Let $\Sigma_{i,j} = \sigma_{i,j}^\Sigma$ denote the segments $S_iu_j \cap \Sigma$. We replace each segment $\sigma_{i,j}$ by a broken line in the form of a zigzag, see Figure~\ref{fig:zigzag}. By making the steps in each segment small enough, we can make sure that the broken lines for disjoint segments are disjoint. Moreover, if two segments $\sigma$ and $\sigma'$ intersect inside $\Sigma$, by making the ``teeth'' of one broken line much smaller than the other, we can make sure there is no intersection between the broken lines other than the original intersection; see Figure~\ref{fig:zigzag}. Let $\nu_{i,j} = \nu_{i,j}^\Sigma$ be the number of steps in the staircase of $\sigma_{i,j}$. By injecting extra small steps into the staircase, we can make sure that all the broken lines have the same number of turns $\nu_\Sigma = \max_{i,j}{\nu_{i,j}}$, see Figure~\ref{fig:addteeth}. Moreover, we make sure that each broken line, from those replaced for $\sigma$ and $\sigma'$, has the same number of turns before the intersection, and after the intersection. This again can be achieved by inserting small steps. Therefore, it is possible to make all the turn numbers equal.

\begin{figure}
    \centering
    \includegraphics{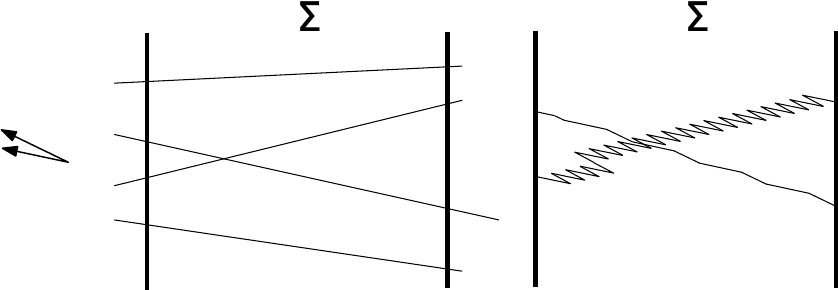}
    \caption{Left: two non-parallel directions, center and right: replacing segments with zigzags in the two given directions.}
    \label{fig:zigzag}
     
\end{figure}

\begin{figure}
    \centering
    \includegraphics{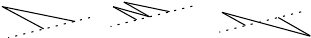}
    \caption{Adding extra teeth to zigzags.}
    \label{fig:addteeth}
\end{figure}
We perform the above operation for each strip $\Sigma$. We make sure that when two consecutive broken lines join on a vertical line $l_j$, no turn is generated. Further, again by injecting extra turns, we can make sure that the number of turns of any segment, inside any strip equals $\nu = \max_\Sigma{\nu_\Sigma}$. This finishes the new drawing of the part of the graph $G$ between $B$ and $C$. We then draw the nodes of part $A$ on a line of direction either $\overrightarrow{e_1}$ or $\overrightarrow{e_2}$. We then connect the last of these nodes, by broken lines with $\nu$ turns each, to all the nodes of $B$. It is possible to do this without creating intersections around the last node of $A$; see Fig~\ref{fig:anode}.  We can draw the legs of $C$ easily. This finishes the drawing of the graph. 

\begin{figure}
    \centering
    \includegraphics{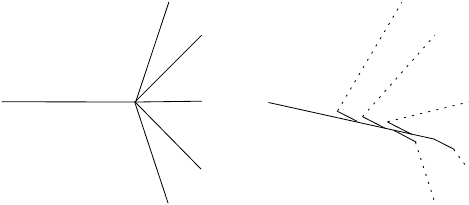}
    \caption{Realizing a high-degree node with two slopes. The dashed lines will be replaced with zigzags, each robot still goes through the same number of turns.}
    \label{fig:anode}
\end{figure}


The rest of the proof now again follows the ideas of \cite{dumitrescu2013}. Since the directions are constant, the finest geometrical details in the drawing are lower bounded by a function of form $1/p(m)$ for a polynomial $m$, and the number $\nu$ is upper bounded by a polynomial in $p$. Therefore, we can realize the (new) drawing of the graph inside a box of size $q(m)\times q(m)$, where $m$ is the number of sets and $q$ is a polynomial. The box is decomposed into unit squares. By scaling the box drawing by a polynomial factor, we can make sure that each arc of the drawing connecting two consecutive turns crosses a large enough number of cells of the grid.

Let $Q$ denote a unit square. We next thicken the drawing by taking the space resulting from placing a unit square at each point of the drawing. Let this thickening be $G_Q$. We remove all the cells of the box which intersect $G_Q$, making room for robots to move along $G_Q$ centered at $G$. We put $s$ unit square robots at the nodes of $A$, and $S$ robots at the nodes of $B$, and we perturb them a little so that they cannot move to the nodes in $C$ in one move (in our setting this is really needed in very degenerate situations). Each $S_i$, together, with all unit squares which we have not removed from the grid, have the same start and end positions. We also carve out a little path for each $S_i$ so that it can move temporarily to make room for movements of robots to their destinations in $C$. The whole graph will be put inside layers of extra obstacles around to simulate a bounding box. Polynomially-many of these would suffice.   

Each robot in $A$ takes $c + \nu + \nu(\chi+1)$ many moves at least to reach its destination in $C$. Here $c$ is a constant counting extra bends introduced around $A$ and $B$ and $C$. These are the same extra number of moves for all robots. The first $\nu$ is to get from $A$ to $B$, and then by $\nu(\chi+1)$ moves it can get from $B$ to $C$. The main point of the reduction is that there is $q$ sets that cover $U$, if and only $|U|(c+\nu+\nu(\chi+1))+2q$ moves are enough for moving all robots to their destinations. Here $2q$ is the number of moves made by $S_i$ to move into a temporary position and back. If a robot decides to make a turn at an intersection inside some $\Sigma$, it will make $\nu+1$ turns in that strip.

\end{proof}

We note that the \NP-completeness result in Theorem~\ref{t:nphard} does not contradict the \PSPACE-hardness of $\bigplus$-\srect{}-$\msquare$. The simulation of $\bigplus$-\srect{}-$\msquare$ by our problem requires simulating a bounding box. However, we can use only polynomially many robots, and $\bigplus$-\srect{}-$\msquare$ might need exponentially many moves.
\fi

\ifshort
\section{Concluding Remarks}
\label{sec:conclusion}
We studied the complexity and developed parameterized algorithms for fundamental computational geometry problems pertaining to the motion planning of rectangular robots in the plane. Several follow-up questions ensue from this work:  

\begin{enumerate}

\item 
What is the parameterized complexity of the problem variant in which there is no restriction on the translation directions? One possible approach to show \FPT{} for this variant is to show that there is a set of possible positions for the robots, that can be computed based on the instance, that depends on the geometric complexity polynomially, and that transforms the continuous problem into a discrete one. We conjecture this to be true in this case. 

\item What is the parameterized complexity of the problem for other geometric shapes (e.g., congruent disks)?  

\item What is the parameterized complexity of the problem for  environments  with obstacles? 
\end{enumerate}
\fi

\iflong
\section{Concluding Remarks}
\label{sec:conclusion}
We studied the complexity and developed parameterized algorithms for fundamental computational geometry problems pertaining to the motion planning of rectangular robots in the plane. We studied various settings of the problem, and showed them to be \NP-hard/complete and fixed-parameter tractable as well. 

The following follow-up questions to this research are important, natural to ask, and remain unresolved up to our knowledge:

The first question pertains to the parameterized complexity of the problem variant in which there is no restriction on the translation directions (i.e., rectangles can be translated in any direction).  More specifically, does the problem remain \FPT{} parameterized by the number of robots?  In essence, we are asking if there is a ``grid'', i.e. a set of possible positions for the robots, that can be computed beforehand, that depends on the geometric complexity polynomially, and that transforms the continuous problem into a discrete one. We conjecture this to be true for rational points when the translation directions are arbitrary.

The second question pertains to the fixed-parameter tractability of the translation of other geometric shapes, in particular the well-studied problem of translating disks (see Section~\ref{sec:intro}). The problem is known to be \NP-hard~\cite{dumitrescu2013}, and a modification of the reduction we presented in Section~\ref{sec:nphardfixeddirections} shows that the problem remains \NP-hard when restricted to axis-aligned translations. The membership of the problem in \NP{} is not known. The axis-aligned translation of (unit) disks seems to be computationally harder than that of rectangles. In particular, the techniques we used in Section~\ref{sec:general} to show fixed-parameter tractability do not seem to work for disks mainly due to our inability to model collision using linear constraints (even after enumeration), as modelling collision seems to require constraints that involve distances, and hence, may not be encoded using linear constraints.  

Finally, the problems under consideration in this paper become naturally harder when the robots move in an environment with obstacles. As observed in this paper, even with a bounding box (and no other obstacles) the problems become \PSPACE-hard. It is interesting to investigate the parameterized complexity of these problems when obstacles are present in the environment. In particular, do the problems for which we obtained \FPT{} algorithms in this paper remain \FPT{} (parameterized by $k$) in an environment with (polygonal obstacles)?
\fi

\bibliography{ref}

\end{document}

 \section{On the Visibility Regions}

We are given $k$ robots $R_1, \ldots, R_k$. Each robot is a rectangle of fixed width and length $w_i, l_i$, respectively, $i=1,\ldots,k$.
We are also given $k$ start positions for robots, denoted $s_1, \ldots, s_k \in \Rspace^2$ and $k$ end positions $e_1, \ldots, e_k \in \Rspace^2$.
We assume that each coordinate of the the start and end positions is a rational number. The size of the input is the total number of bits that
is used to encode the coordinates of the starting and end positions and the dimensions of the $k$ robots.

The \problem problem asks to find the smallest number of translations that moves all of the robots from their starting positions to their end positions, and
such that during the whole process no intersection happens between two robots. We denote the optimal number of moves by $L$.

\paragraph{The \region }
Suppose we have an instance $S$ of the \problem. Let $s \in \Rspace^{2k}$ be a $k$-tuple of points. We now fix the endpoints $e_1, \ldots, e_n$. For each $l\geq 0$, the \textit{\region} 
(with respect to the fixed endpoints) is defined as

$$ A_l:=\{ s \in \Rspace^{2k} | \text{there is an optimal solution with $l$ moves starting from $s$}  \}.$$
the plane by fixing all but one starting point. The \region of
$R_i$, fixing $R_j$ at $p_j$, $j\neq i$, is defined as

$$A_{l,{\hat i}}(p_1, \ldots, p_{i-1}, p_i, \ldots, p_k) :=\{ p \in \Rspace^{2} | (p_1, \ldots,p_{i-1},p,p_{i+1},\ldots,p_k) \in A_l \}.$$

Let $\pi = (m_1,m_2,\ldots,m_l)$ be an $l$-tuple of numbers $m_i \in \{1, \ldots, k\}$ that gives the ordered sequence of the
robots that move, potentially in an optimal solution for the problem instance $S$.
Given a sequence of moves $\pi$, we can restrict the \region to those where the sequence of moves agrees with the given order. We call these
sequences $A^\pi_l$ and $A^\pi_{l,\hat{i}}(p_1, \ldots, p_{i-1}, p_{i+1}, \ldots, p_k)$. From now on, we write $P = (p_1, \ldots, p_k)$ and
$P({\hat{i}})$ for the sub-sequence obtained by removing $p_i$. Therefore, we can write $A^\pi_{l}(P(\hat{i}))$ for short.

\paragraph{reachability}

TO DO- This is same as visibility but considers only valid positions that the center of the robot can reach without causing intersection. The reach
of a region $B$ is denoted as $\reach(B)$. 

\paragraph{Determining \region}

It is possible to determine $A^{\pi}_1(P(\hat{i}))$ for all possible $P(\hat{i})$ in terms of the end positions.

In this section we want to write formulas that express the sets $A^\pi_l(P(\hat{i}))$ for all possible $P(\hat{i})$ in terms of $A_{l-1}(P(\hat{i}))$ for all possible $P(\hat{i})$.
Notice that we are moving backwards from the end positions.

Assume that we know the sequence of move $\pi$ and that we know $A_{l-1}(P(\hat{i}))$ for all possible $P(\hat{i})$. 

The sequence $\pi$ provides us with the robot that move at each step. Without loss of generality we assume that the robot that moves in the $L-(l-1)$-th move is $R_1$.
$R_1$ moves from $p'_1$ into $p_1$ in the forward direction. The rest of the robots are fixed at $p_2, \ldots, p_k$. We know that $p_1 \in A^\pi_{l-1}(P(\hat{1}))$. 

\begin{lemma}
    If robot $R_1$ moves in the $L-(l-1)$-th move of an optimal solution with ordering $\pi$, and other robots are fixed at $P(\hat{1})$, then 
    $$A^\pi_l(P(\hat{1})) = \reach(A^\pi_{l-1}(P(\hat{1}))) - A^\pi_{l-1}(P(\hat{1})).$$
    
\end{lemma}

The above lemma describe the relation in an easy case (say why). The hard case is to update the feasibility region where the first robot is fixed.
Let $\sigma_l$ be a segment that the center of a robot $R$ traces in the $l$th move. We defined the \textit{thickening} of $\sigma_l$ as $Th(R,\sigma_l)=\bigcup_{x\in \sigma_l} R(x)$, where $R(x)$ is
the robot at position $x$.

\begin{lemma}
    If robot $R_1$ moves in the $L-(l-1)$-th move of an optimal solution with ordering $\pi$ from $p'_1$ to $p_1$, and other robots are fixed at $P(\hat{1})$, then for $j\neq 1$
    
    $$A^\pi_l( p'_1, p_2, \ldots, p_{j-1}, p_{j+1}, \ldots, p_k) =  \bigcup_{p} \big[ A^\pi_{l-1}(p,p_2, \ldots, p_{j-1}, p_{j+1}, \ldots, p_k) - Th(R_1, pp'_1) \big].$$
\end{lemma}

\subsection{Moving along Fixed Directions}
In this section, we prove that for \srect{}-$\msquare$ restricted to a fixed set of directions, under certain necessary conditions, we can obtain an upper bound on the number of moves that depends only on $k$. The notion of a configuration is defined analogously to that for the axis-aligned case.

\begin{proposition}
\label{prop:fixeddirectionsbox}
Let $\VVV_0$ be a fixed set of directions.
Consider an instance $\I=(\R, \VVV_0, k, \ell, \Gamma)$ of \srect-$\msquare$ or \prect-$\msquare$. In addition, assume that the sizes of the robots in $\R$ satisfy the following relation

\[ M(B) - M(\R)  > \eps > 0, \]

where $M(\Gamma)$ is the length of the longest side of the bounding box, and $M(\R)$ is the summation of longest sides of all the robots, and $\eps$ is a constant. Then, there is a constant $c$ (depending on $\eps$ and $\VVV_0$) such that, for any two realizations $\rho, \rho'$ of a configuration $C$ within $\Gamma$, there is a sequence of at most $ck$ valid moves that translate the robots from their positions in $\rho$ to their positions in $\rho'$.
    
\end{proposition}

\begin{proof}
If all directions in $\VVV_0$ are parallel, then clearly if the instance is feasibly then it has a schedule of length at most $k$. So assume $\VVV_0$ contains at least two non-parallel directions $\overrightarrow{v}$ and $\overrightarrow{w}$.
In the proof of Proposition~\ref{p:aaconfiguration}, it was shown that two realizations $\rho$ and $\rho'$ for the same configuration can be obtained from each other by a sequence of horizontal and vertical moves. We now prove that each of these moves can be simulated by a sequence of moves in the directions of $\overrightarrow{v}$ and $\overrightarrow{w}$. W.l.o.g, we can assume that the move we are simulating is horizontal and to the left. We intend to replace the horizontal move by a zigzag of moves along $\overrightarrow{v}$ and $\overrightarrow{w}$, see Figure~\ref{}. If it turns out that $\overrightarrow{v}$ and $\overrightarrow{w}$ are close to each other, then we potentially need many zigs and zags. The directions are constant for us.\todo{Iyad: This is not clear.} Also, it can be easily observed that the number of moves in the zigzag depends on the ``width'' of the passageway we need to cross.\todo{Iyad: This is not clear.} To increase this width, we separate the robots as follows.

\todo{Iyad: This proof needs clarifications.}
 If the robot has already a non-zero vertical distance to all robots above it, or to all robots below it, we can replace the horizontal move by a sequence of $Ck$ zigzag moves where $C$ depends on the coordinates of the directions and the extensions of the robots. If there is a stack of robots touching each other on top of a robot $R$ that we want to move, and also a stack on bottom (imagine this as a ``connected component'' of robots), then at least one of these stacks must not touch the horizontal line of the bounding box, since then no move will be possible in a direction other than that direction is horizontal. W.l.o.g, assume that the stack on top of $R$ ends before the top edge of the bounding box. We first move the stack by at most $k-1$ moves a little away from $R$ (this can be done for instance by moving a robot whose upper and left edges do not touch anything first, and then repeat this process for the remaining component), and then perform $Ck$ moves to move $R$ to the left, and then move back the stack by at most $k-1$ moves to their original place. We therefore can simulate each horizontal move by $C'k$ moves where $C'$ is constant depending on the bit-size of the directions and the extensions of the robots.
\end{proof}

If the motion is not axis-aligned, we can only obtain \FPT{} result for the case where $\VVV$ is fixed and satisfies the conditions in Proposition~\ref{prop:fixeddirectionsbox}:

\begin{theorem}
\label{thm:mainboxfixeddirections}
Given an instance $(\R, \VVV, k, \ell, \Gamma)$ of  \srect{}-$\msquare$ in which the box and $\R$ satisfy the conditions in Proposition~\ref{prop:fixeddirectionsbox}, in time $\Oh^*(5^{k^2}\cdot(4^{32k} \cdot k \cdot |\VVV|)^{\Oh(k)})$, we can construct a solution to the instance or decide that no solution exists, and hence \srect{}-$\msquare$ is \FPT.
\end{theorem}

\section{Extension to Polygons}
\label{sec:polygons}
In this section, we show that the \FPT{} algorithm for \srect{}, presented in Section~\ref{sec:general}, can be extended to \spcmp{} for any set $\R$ of closed polygons such that the number of edges of each polygon is upper bounded by a constant, or even by any computable function $p(k)$ of the parameter $k$. The algorithm proceeds in the same fashion by guessing $ 0 \leq \ell \leq 2k$ and then guessing the sequence ${\cal E}$ of $\ell$ events. For a polygonal robot $R$, a vector $\overrightarrow{v} \in \VVV$ and $\alpha >0$, \trace($R_i, \alpha \cdot \overrightarrow{v}$) is defined similarly, but now it can have at most $2p(k)$ edges. Therefore, to stipulate non-collision between $R$ and a robot $R_j \in \R$, 
$2(p(k))^2$ many pairs of segments need to be considered, each pair giving rise to four cases to guess from. By a similar analysis to that in the previous section, we have the following theorem:

\begin{theorem}
\label{thm:polfpt}
Given an instance $(\R, \VVV, k, \ell)$, where each robot in $\R$ is a closed polygon with at most $p(k)$ edges, for some computable function $p(k)$, in time $\Oh^*((4^{(2p(k))^2} \cdot k \cdot |\VVV|)^{2k})$, we can construct a solution to the instance or decide that no solution exists, and hence \spcmp{} is \FPT.
\end{theorem}

\begin{corollary}
     Let $c > 0$ be an integer constant. The restriction of \spcmp{} to instances in which each  polygon has at most $c$ edges is in \NP.
 \end{corollary}

